\newcommand{\Bes}{\mathrm{Bes}}
\renewcommand{\Re}{\mathrm{Re}\,}
\renewcommand{\Im}{\mathrm{Im}\,}
\newcommand{\Ai}{\mathrm{Ai}}
\newcommand{\ud}{\,\mathrm{d}}
\newcommand{\what}{\widehat}
\newcommand{\ii}{\mathrm{i}}
\newcommand{\Boh}{\mathcal{O}}
\newtheorem{theorem}{Theorem}[section]
\newtheorem{lemma}[theorem]{Lemma}
\newtheorem{proposition}[theorem]{Proposition}
\newtheorem{corollary}[theorem]{Corollary}
\newtheorem{rhp}[theorem]{RH problem}
\theoremstyle{definition}
\theoremstyle{remark}
\newtheorem{remark}[theorem]{Remark}
\newtheorem*{notation}{Notations}
\numberwithin{equation}{section}
\begin{document}
\title{Asymptotic expansion of the hard-to-soft edge transition}
\author{Luming Yao\footnotemark[1] ~~and ~Lun Zhang\footnotemark[2]}

\renewcommand{\thefootnote}{\fnsymbol{footnote}}
\footnotetext[1]{Institute for Advanced Study, Shenzhen University, Shenzhen 518060, China. E-mail: lumingyao\symbol{'100}szu.edu.cn.}
\footnotetext[2]{School of Mathematical Sciences, Center for Applied Mathematics and Shanghai Key Laboratory for Contemporary Applied Mathematics, Fudan University, Shanghai 200433, China. E-mail: lunzhang\symbol{'100}fudan.edu.cn.}
\date{\today}
\maketitle
\begin{abstract}
By showing that the symmetrically transformed Bessel kernel admits a full asymptotic expansion for large parameter, we establish a hard-to-soft edge transition expansion. This resolves a conjecture recently proposed by Bornemann.
\end{abstract}


\section{Introduction and statement of results}
Consideration in this paper is a universal phenomenon arising from the random matrix theory, namely, the hard-to-soft edge transition \cite{BF03}. As a concrete example, let $X_1$ and $X_2$ be two $n \times (n+\nu)$, $\nu \geq 0$, random matrices, whose element is chosen to be an independent normal random variable. The $n \times n$ complex Wishart matrix $X$ or the Laguerre unitary ensemble (LUE), which plays an important role in statistics and signal processing (cf. \cite{IEEE,Wishart} and the references therein), is defined to be
$$
X=(X_1+\ii X_2)(X_1+\ii X_2)^\ast,
$$
where the superscript $^\ast$ stands for the operation of conjugate transpose. As $n\to \infty$ with $\nu$ fixed, the smallest eigenvalue of $X$ accumulates near the hard-edge $0$. After proper scaling, the limiting process is a determinantal point process characterized by the Bessel kernel \cite{Forr93,FN}
\begin{equation}\label{def:Beskernel}
K_{\nu}^{\textrm{Bes}}(x,y):=\frac{J_{\nu}(\sqrt{x})\sqrt{y}J_{\nu}'(\sqrt{y})-J_{\nu}(\sqrt{y})\sqrt{x}J_{\nu}'(\sqrt{x})}{2(x-y)}, \qquad x,y>0,
\end{equation}
where $J_{\nu}$ is the Bessel function of the first kind of order $\nu$ (cf. \cite{DLMF}). If the parameter $\nu$ grows simultaneously with $n$ in such a way that $\nu/n$ approaches a positive constant, it comes out that the smallest eigenvalue is pushed away from the origin, creating a soft-edge. The fluctuation around the soft-edge, however, is given by the Airy point process \cite{Forr93,FN}, which is determined by the Airy kernel
\begin{align}\label{def:KAi}
K^{\Ai}(x,y) = \frac{\Ai (x) \Ai'(y)-\Ai'(x)\Ai(y)}{x-y}, \qquad   x,y\in\mathbb{R},
\end{align}
where $\Ai$ is the standard Airy function. One encounters the same limiting process by considering the scaled cumulative distribution of largest eigenvalues for large random Hermitian matrices with complex Gaussian entries, which is also known as the Tracy-Widom distribution \cite{TW94}.

Besides the above explanation of the hard-to-soft edge transition, the present work is also highly motivated by its connection with distribution of the length of longest increasing subsequences. Let $S_n$ be the set of all permutations on $\{1,2,\ldots,n\}$. Given $\sigma\in S_n$, we denote by $L_n(\sigma)$ the length of longest increasing subsequences, which is defined as the maximum of all $k$ such that $1\leq i_1<i_2<\cdots<i_k\leq n$ with $\sigma(i_1)<\sigma(i_2)<\cdots<\sigma(i_k)$. Equipped $S_n$ with the uniform measure, the question of the distribution of discrete random variable $L_n(\sigma)$ for large $n$ was posed by Ulam in the early 1960s \cite{Ulam}. After the efforts of many people (cf. \cite{AD99,Romik} and the references therein), Baik, Deift and Johansson finally answered this question in a celebrated work \cite{BDJ} by showing
\begin{equation}\label{eq:LnF}
\lim_{n\to \infty}\mathbb{P}\left(\frac{L_n-2\sqrt{n}}{n^{1/6}}\leq t\right) = F(t),
\end{equation}
where
\begin{equation}\label{def:F}
F(t):=\det(I-K^{\Ai})\big|_{L^2(t, \infty)}
\end{equation}
is the aforementioned Tracy-Widom distribution with $K^{\Ai}$ being the Airy kernel in \eqref{def:KAi}.

To establish \eqref{eq:LnF}, a key ingredient of the proof is to introduce the exponential generating function of $L_n$ defined by
$$
P(r;l)=e^{-r}\sum_{n=0}^{\infty}\mathbb{P}(L_n \leq l)\frac{r^n}{n!}, \qquad r>0,
$$
which is known as Hammersley's Poissonization of the random variable $L_n$. The quantity itself can be interpreted as the cumulative distribution of $L(r)$ -- the length of longest up/right path from $(0,0)$ to $(1,1)$ with nodes chosen randomly according to a Poisson process of intensity $r$; cf. \cite[Chapter 2]{BDS}. By representing $P(r;l)$ as a Toeplitz determinant, it was proved in \cite{BDJ} that
\begin{equation}\label{eq:LF}
\lim_{r \to \infty}\mathbb{P}\left(\frac{L(r)-2\sqrt{r}}{r^{1/6}}\leq t\right) = F(t).
\end{equation}
This, together with Johansson's  de-Poissonization lemma \cite{Johan98}, will lead to \eqref{eq:LnF}.

Alternatively, one has (see \cite{BF03,FH})
\begin{equation}\label{eq:PE2}
P(r;l)=E_2^{\mathrm{hard}}(4r;l),
\end{equation}
where
\begin{equation}\label{def:E2}
E_2^{\mathrm{hard}}(s;\nu):=\det(I-K_\nu^{\Bes})\big|_{L^2(0,s)}
\end{equation}
with $K_\nu^{\Bes}$ being the Bessel kernel defined in \eqref{def:Beskernel} is the scaled hard-edge gap probability of LUE over $(0,s)$. Thus, by showing the hard-to-soft transition
\begin{equation}
\lim_{\nu \to \infty}E_2^{\mathrm{hard}}\left(\left(\nu-t(\nu/2)^{1/3}\right)^2;\nu\right)=F(t),
\end{equation}
Borodin and Forrester reclaimed \eqref{eq:LF} in \cite{BF03}.

An interesting question now is to improve \eqref{eq:LnF} and \eqref{eq:LF} by establishing the first few finite-size correction terms or the asymptotic expansion. This is also known as edgeworth expansions in the literature, and we particularly refer to \cite{B16,Choup,EGP,Karoui,ForTri19,PS16} for the relevant results of Laguerre ensembles. In the context of the distribution for the length of longest increasing subsequences, the relationship \eqref{eq:PE2} plays an important role in a recent work of Bornemann \cite{B23} among various studies toward this aim \cite{BJ,BFoCM,FM23}. Instead of working on the Fredholm determinant directly, the idea in \cite{B23} is to establish an expansion between the Bessel kernel and the Airy kernel, which can be lifted to trace class operators. It is the aim of this paper to resolve some conjectures posed therein.

To proceed, we set
\begin{equation}\label{def:h}
h_{\nu}:=2^{-\frac 13} \nu^{-\frac 23},
\end{equation}
and define, as in \cite{B23}, the symmetrically transformed Bessel kernel
\begin{align}\label{def:inducedkernel}
\hat K_{\nu}^{\mathrm{Bes}}(x,y):=\sqrt{\phi_{\nu}'(x) \phi_{\nu}'(y)}K_{\nu}^{\mathrm{Bes}}(\phi_{\nu}(x),\phi_{\nu}(y)),
\end{align}
where
\begin{equation}\label{def:phi}
\phi_{\nu}(t):= \nu^2(1-h_{\nu}t)^2.
\end{equation}

Our main result is stated as follows.
\begin{theorem}\label{th:1}
With $\hat K_{\nu}^{\mathrm{Bes}}(x,y)$ defined in \eqref{def:inducedkernel}, we have, for any $\mathfrak{m}\in\mathbb{N}$,
\begin{align}\label{h-s}
\hat K_{\nu}^{\mathrm{Bes}}(x,y) = K^{\Ai}(x,y) + \sum_{j=1}^{\mathfrak{m}} K_j(x,y) h_{\nu}^j+h_{\nu}^{\mathfrak{m}+1} \cdot \Boh\left(e^{-(x+y)}\right), \qquad h_{\nu} \to 0^+,
\end{align}
uniformly valid for $t_0 \le x, y <h_{\nu}^{-1}$ with $t_0$ being any fixed real number. Preserving uniformity, the expansion can be repeatedly differentiated w.r.t. the variable $x$ and $y$. Here, $K^{\Ai}$ is the Airy kernel given in \eqref{def:KAi} and
\begin{align}\label{def:Kj}
K_j(x,y)=\sum_{\kappa, \lambda \in \{0,1\}} p_{j, \kappa \lambda}(x,y) \Ai^{(\kappa)}(x) \Ai^{(\lambda)}(y)
\end{align}
with $p_{j, \kappa \lambda} (x,y)$ being polynomials in $x$ and $y$. Moreover, we have
\begin{align}\label{def:K1}
K_1(x,y) &= \frac{1}{10}\left(-3(x^2+xy+y^2)\Ai(x)\Ai(y)+2(\Ai(x)\Ai'(y)+\Ai'(x)\Ai(y))\right. \nonumber \\
&\left.~~~ +3(x+y)\Ai'(x)\Ai'(y)\right),
\end{align}
and
\begin{align}\label{def:K2}
K_2(x,y) &= \frac{1}{1400}\left((56-235(x^2+y^2)-319xy(x+y))\Ai(x)\Ai(y)\right. \nonumber \\
&~~~ +(63(x^4+x^3y-x^2y^2-xy^3-y^4)-55x+239y)\Ai(x)\Ai'(y) \nonumber \\
&~~~ +(63(y^4+xy^3-x^2y^2-x^3y-x^4)-55y+239x)\Ai'(x)\Ai(y)\nonumber \\
&\left.~~ +(340(x^2+y^2)+256xy)\Ai'(x)\Ai'(y)\right).
\end{align}
\end{theorem}

Based on a uniform version of transient asymptotic expansion of Bessel functions \cite{Olver52}, the above theorem is stated in \cite{B23} under the condition that $0 \le \mathfrak{m} \le \mathfrak{m}_* =100$, where the upper bound $\mathfrak{m}_*$ is obtained through a numerical inspection. It is conjectured therein that \eqref{h-s} is valid without such a restriction, Theorem \ref{th:1} thus gives a confirm answer to this conjecture.


As long as the Bessel kernel admits an expansion of the form \eqref{h-s}, it is generally believed that one can lift the expansion to the associated Fredholm determinants. By carefully estimating trace norms in terms of kernel bounds, this is rigorously established in \cite[Theorem 2.1]{B23} for the perturbed Airy kernel determinants, which allows us to obtain the following  hard-to-soft edge transition expansion with the aid of Theorem \ref{th:1}.
\begin{corollary}
With $E_2^{\mathrm{hard}}$ defined in \eqref{def:E2}, we have, for any $\mathfrak{m}\in\mathbb{N}$,
\begin{align}\label{eq:hard-to-softexpan}
E_2^{\textrm{hard}}(\phi_{\nu}(t); \nu) = F(t) + \sum_{j=1}^{\mathfrak{m}} F_j(t) h_{\nu}^{j}+h_{\nu}^{\mathfrak{m}+1} \cdot \Boh\left(e^{-3t/2}\right),\qquad h_{\nu} \to 0^+,
\end{align}
uniformly valid for $t_0 \leq t < h_{\nu}^{-1}$ with $t_0$ being any fixed real number. Preserving uniformity, the expansion can be repeatedly differentiated w.r.t. the variable $t$. Here, $F$ denotes the Tracy-Widom distribution \eqref{def:F} and $F_j$ are certain smooth functions.
\end{corollary}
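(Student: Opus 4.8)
The plan is to deduce the corollary from Theorem~\ref{th:1} together with the lifting result \cite[Theorem~2.1]{B23}, after a change of variables that turns the hard-edge gap probability into a boundary-truncated perturbation of the Airy-kernel determinant. First I would record the identity
\begin{equation*}
E_2^{\mathrm{hard}}(\phi_\nu(t);\nu)\;=\;\det\!\big(I-\hat K_\nu^{\mathrm{Bes}}\big)\big|_{L^2(t,\,h_\nu^{-1})},\qquad t<h_\nu^{-1},
\end{equation*}
which follows from \eqref{def:E2}, the substitution $u=\phi_\nu(\xi)$ --- a decreasing bijection of $(t,h_\nu^{-1})$ onto $(0,\phi_\nu(t))$ by \eqref{def:phi} --- and the invariance of Fredholm determinants under such a similarity transformation, the Jacobian factors $\sqrt{\phi_\nu'(\xi)\phi_\nu'(\eta)}$ (well defined since $\phi_\nu'(\xi)\phi_\nu'(\eta)>0$ on that range) being exactly those appearing in \eqref{def:inducedkernel}. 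This reduces the corollary to the large-$\nu$ asymptotics of the determinant on the right-hand side.

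Next I would eliminate the truncation at $h_\nu^{-1}$. Since each coefficient $K_j$ in Theorem~\ref{th:1} is a polynomial times products of $\Ai$ and $\Ai'$, both $K^{\Ai}$ and $K_j$ are $\Boh(e^{-\frac23 h_\nu^{-3/2}})$ on $\{\max(x,y)>h_\nu^{-1}\}$, so a routine trace-norm estimate shows that, for the model kernel $K^{\Ai}+\sum_{j=1}^{\mathfrak m}K_j h_\nu^j$, replacing $L^2(t,h_\nu^{-1})$ by $L^2(t,\infty)$ changes the determinant by a super-polynomially small amount $\Boh(e^{-c h_\nu^{-3/2}})$, absorbable into the remainder $h_\nu^{\mathfrak m+1}\Boh(e^{-3t/2})$; the same tail bound applies to $\hat K_\nu^{\mathrm{Bes}}$ itself via \eqref{h-s}.

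The heart of the argument is then an application of \cite[Theorem~2.1]{B23}, whose hypotheses are supplied verbatim by Theorem~\ref{th:1}: $\hat K_\nu^{\mathrm{Bes}}$ equals $K^{\Ai}$ plus the polynomial-in-$h_\nu$ correction $\sum_{j=1}^{\mathfrak m}K_j h_\nu^j$ plus a remainder bounded by $h_\nu^{\mathfrak m+1}\Boh(e^{-(x+y)})$, uniformly on $x,y\ge t_0$ and stable under repeated differentiation in $x,y$. Conjugating the remainder by the bounded resolvent $(I-K^{\Ai}\chi_{(t,\infty)})^{-1}$ and carrying out the trace-norm bookkeeping of \cite[Theorem~2.1]{B23} converts the pointwise kernel bound into the determinant bound $h_\nu^{\mathfrak m+1}\Boh(e^{-3t/2})$; expanding the remaining polynomial perturbation of $F(t)=\det(I-K^{\Ai})\big|_{L^2(t,\infty)}$ via the identity $\det(I-A-B)=\det(I-A)\exp\!\big(-\sum_{k\ge1}\tfrac1k\tr[((I-A)^{-1}B)^k]\big)$ and collecting equal powers of $h_\nu$ yields \eqref{eq:hard-to-softexpan}, with, for example,
\begin{equation*}
F_1(t)\;=\;-\,F(t)\,\tr\!\big((I-K^{\Ai}\chi_{(t,\infty)})^{-1}K_1\chi_{(t,\infty)}\big)
\end{equation*}
and analogous $\mathfrak m$-independent expressions for the higher $F_j$. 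Smoothness of the $F_j$ in $t$ and the validity of the differentiated expansion follow from the differentiation clause of Theorem~\ref{th:1}, the smoothness of the Airy resolvent in the endpoint $t$, and the standard Fredholm-determinant calculus for boundary-truncated operators.

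I expect the only real difficulty to sit upstream of this assembly: the delicate point is the \emph{uniform} trace-norm control over the growing window $t_0\le t<h_\nu^{-1}$ that upgrades the pointwise remainder of \eqref{h-s} to the determinant remainder, and this is exactly what \cite[Theorem~2.1]{B23} already provides, so the genuinely new ingredient here is the all-orders kernel expansion of Theorem~\ref{th:1}. A secondary subtlety worth flagging is that $\phi_\nu$ ceases to be injective past $h_\nu^{-1}$, which is what forces the truncation at $h_\nu^{-1}$ and the (harmless) tail estimate above.
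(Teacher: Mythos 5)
Your proposal is correct and follows essentially the same route as the paper, which proves this corollary simply by combining Theorem \ref{th:1} with the lifting result \cite[Theorem 2.1]{B23}; the change of variables identifying $E_2^{\mathrm{hard}}(\phi_\nu(t);\nu)$ with the Fredholm determinant of $\hat K_\nu^{\mathrm{Bes}}$ and the trace-norm bookkeeping you sketch are exactly the content of that cited theorem. Your additional details (the tail estimate past $h_\nu^{-1}$ and the log-determinant expansion giving $F_1$) are consistent with, and slightly more explicit than, what the paper records.
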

Again the above result is stated in \cite{B23} under a restriction on the number of summation but with explicit expressions of $F_1$ and $F_2$ in terms of the derivatives of $F$. The expansion \eqref{eq:hard-to-softexpan} serves as a preparatory step in establishing the expansion of the limit law \eqref{eq:LnF} in \cite[Theorem 5.1]{B23}. Finally, we also refer to \cite{CL23} for exponential moments, central limit theorems and rigidity of the hard-to-soft edge transition.

The rest of this paper is devoted to the proof of Theorem \ref{th:1}. The difficulty of using transient asymptotic expansion of the Bessel functions for large order to prove Theorem \ref{th:1} lies in checking the divisibility of a certain sequence of polynomials. Indeed, it was commented in \cite{B23} that one probably needs some hidden symmetry of the coefficients in the expansion. The approach we adopt here, however, is based on a Riemann-Hilbert (RH) characterization of the Bessel kernel, as described in Section \ref{sec:bessel}. By performing a Deift-Zhou nonlinear steepest descent analysis \cite{Deift1993} to the associated RH problem in Section \ref{sec:rhp}, the initial RH problem will be transformed into a small norm problem, for which one can find a uniform estimate. The proof of Theorem \ref{th:1} is an outcome of our asymptotic analysis, which is presented in Section \ref{sec:proof}. Our analysis also allows us to calculate the polynomial coefficients $p_{j, \kappa \lambda}$ of the expansion kernels $K_j(x,y)$ in a systematic way and to reproduce asymptotic expansion of the Bessel functions $J_{\nu}$ for large order and large argument; see Remark \ref{rk:coeffcal} and Appendix \ref{ap:B} below for details. In addition, we are confident that our methodology can be adapted to explore a variety of Edgeworth expansions arising from random matrix theory and beyond.


\begin{notation}
 Throughout this paper, the following notations are frequently used.
\begin{itemize}
  \item If $A$ is a matrix, then $(A)_{ij}$ stands for its $(i,j)$-th entry. We use $I$ to denote a $2\times2$ identity matrix.

  \item As usual, the three Pauli matrices $\{\sigma_j\}_{j=1}^3$ are defined by
\begin{equation}\label{def:Pauli}
\sigma_1=\begin{pmatrix}
           0 & 1 \\
           1 & 0
        \end{pmatrix},
        \qquad
        \sigma_2=\begin{pmatrix}
        0 & -\ii \\
        \ii & 0
        \end{pmatrix},
        \qquad
        \sigma_3=
        \begin{pmatrix}
        1 & 0 \\
         0 & -1
         \end{pmatrix}.
\end{equation}
\end{itemize}
\end{notation}

\section{An RH characterization of the Bessel kernel}\label{sec:bessel}
Our starting point is the following Bessel parametrix \cite{K2004}, which particularly characterizes the Bessel kernel $K_{\nu}^{\textrm{Bes}}(x,y)$ in \eqref{def:Beskernel}.

For $z\in \mathbb{C} \setminus [0, +\infty)$ and $\nu>-1$, we set
\begin{align}\label{def:psinu}
\Psi_{\nu}(z)=\sqrt{\pi} e^{-\frac{\pi \ii}{4}} \begin{pmatrix}
I_{\nu}\left((-z)^{\frac 12}\right) & -\frac{\ii}{\pi} K_{\nu}\left((-z)^{\frac 12}\right)\\
(-z)^{\frac 12}I_{\nu}'\left((-z)^{\frac 12}\right) & -\frac{\ii}{\pi}(-z)^{\frac 12} K'_{\nu}\left((-z)^{\frac 12}\right)
\end{pmatrix},
\end{align}
where $I_{\nu}$ and $K_{\nu}$ denote the modified Bessel functions of order $\nu$ (see \cite{DLMF}) and define
\begin{align}\label{def:Psi}
\Psi(z; \nu) = \Psi_{\nu}(z) \begin{cases}
\begin{pmatrix}
1 & 0\\
-e^{-\pi \ii \nu} & 1
\end{pmatrix}, & \qquad \arg z \in (0, \frac{\pi}{3}),\\
I, & \qquad \arg z \in (\frac{\pi}{3}, \frac{5 \pi}{3}),\\
\begin{pmatrix}
1 & 0\\
e^{\pi \ii \nu} & 1
\end{pmatrix}, & \qquad \arg z \in (\frac{5\pi}{3}, 2 \pi).
\end{cases}
\end{align}
Then $\Psi(z):=\Psi(z; \nu)$ satisfies the following RH problem.
\begin{rhp}\label{rhp:Psi}
\hfill
\begin{itemize}
\item[\rm (a)] $\Psi(z)$ is defined and  analytic for $z \in \mathbb{C} \setminus \left\{\cup_{j=1}^3 \Gamma_j \cup \{0\} \right\}$, where
\begin{align}\label{def:Gammai}
\Gamma_1:=e^{\frac{\pi \ii}{3}}(0, +\infty), \qquad \Gamma_2:=(0, +\infty), \qquad \Gamma_3:=e^{-\frac{\pi \ii}{3}}(0, +\infty);
\end{align}
see Figure \ref{fig:Psi} for an illustration.
\item[\rm (b)] For $z \in \Gamma_j$, $j=1, 2, 3$, the limiting values of $\Psi$ exist and satisfy the jump condition
\begin{align}
\Psi_+(z)=\Psi_-(z) \begin{cases}
\begin{pmatrix}
1 & 0\\
e^{-\pi \ii \nu} & 1
\end{pmatrix}, & \qquad z \in \Gamma_1,\\
\begin{pmatrix}
0 & 1\\
-1 & 0
\end{pmatrix}, & \qquad z \in \Gamma_2,\\
\begin{pmatrix}
1 & 0\\
e^{\pi \ii \nu} & 1
\end{pmatrix}, & \qquad z \in \Gamma_3.
\end{cases}
\end{align}
\item[\rm (c)] As $z \to \infty$, we have
\begin{align}
\Psi(z) = \begin{pmatrix}
1 & 0\\
-\frac{4 \nu^2+3}{8} & 1
\end{pmatrix} \left(I+\Boh(z^{-1})\right) (-z)^{-\frac{1}{4} \sigma_3} \frac{1}{\sqrt{2}} \begin{pmatrix}
1 & -1\\
1 & 1
\end{pmatrix} e^{((-z)^{1/2}-\frac{\pi \ii}{4}) \sigma_3},
\end{align}
where $\sigma_3$ is defined in \eqref{def:Pauli}, the branch cuts of $(-z)^{\pm \frac 14}$ and $(-z)^{\pm \frac 12}$ are chosen along $[0,+\infty)$.
\item[\rm (d)] As $z \to 0$, we have, for $\nu \notin \mathbb{Z}$,
\begin{align}
\Psi(z) = \widehat \Psi(z) (-z)^{\frac{\nu}{2} \sigma_3}\begin{pmatrix}
1 & \frac{\ii}{2 \sin \pi \nu}\\
0 & 1
\end{pmatrix}
\begin{cases}
\begin{pmatrix}
1 & 0\\
-e^{-\pi \ii \nu} & 1
\end{pmatrix}, & \qquad \arg z \in (0, \frac{\pi}{3}),\\
I, & \qquad \arg z \in (\frac{\pi}{3}, \frac{5 \pi}{3}),\\
\begin{pmatrix}
1 & 0\\
e^{\pi \ii \nu} & 1
\end{pmatrix}, & \qquad \arg z \in (\frac{5\pi}{3}, 2 \pi),
\end{cases}
\end{align}
and for $\nu \in \mathbb{Z}$,
\begin{align}
\Psi(z) = \widehat \Psi(z) (-z)^{\frac{\nu}{2} \sigma_3}\begin{pmatrix}
1 & -\frac{e^{\pi \ii \nu}}{2 \pi \ii} \ln (-z)\\
0 & 1
\end{pmatrix}
\begin{cases}
\begin{pmatrix}
1 & 0\\
-e^{-\pi \ii \nu} & 1
\end{pmatrix}, & \qquad \arg z \in (0, \frac{\pi}{3}),\\
I, & \qquad \arg z \in (\frac{\pi}{3}, \frac{5 \pi}{3}),\\
\begin{pmatrix}
1 & 0\\
e^{\pi \ii \nu} & 1
\end{pmatrix}, & \qquad \arg z \in (\frac{5\pi}{3}, 2 \pi),
\end{cases}
\end{align}
where $\what \Psi(z)$ is analytic at $z=0$ and we choose principal branches for $(-z)^{\pm \frac{\nu}{2}}$ and $\ln (-z)$.
\end{itemize}
\end{rhp}

%
%
%
%
%
%
%
%
%
%
%
%
%

\begin{figure}[ht]
\begin{center}

\tikzset{every picture/.style={line width=0.75pt}} 

\begin{tikzpicture}[x=0.75pt,y=0.75pt,yscale=-1,xscale=1]

\draw    (240,370.33) -- (409,370.5) ;
\draw [shift={(329.5,370.42)}, rotate = 180.06] [fill={rgb, 255:red, 0; green, 0; blue, 0 }  ][line width=0.08]  [draw opacity=0] (8.93,-4.29) -- (0,0) -- (8.93,4.29) -- cycle    ;
\draw    (240,370.33) -- (380.67,450.25) ;
\draw [shift={(314.68,412.76)}, rotate = 209.6] [fill={rgb, 255:red, 0; green, 0; blue, 0 }  ][line width=0.08]  [draw opacity=0] (8.93,-4.29) -- (0,0) -- (8.93,4.29) -- cycle    ;
\draw    (240,370.33) -- (379,290.5) ;
\draw [shift={(313.84,327.93)}, rotate = 150.13] [fill={rgb, 255:red, 0; green, 0; blue, 0 }  ][line width=0.08]  [draw opacity=0] (8.93,-4.29) -- (0,0) -- (8.93,4.29) -- cycle    ;
\draw  [fill={rgb, 255:red, 0; green, 0; blue, 0 }  ,fill opacity=1 ] (242.08,370.3) .. controls (242.1,371.45) and (241.18,372.4) .. (240.03,372.42) .. controls (238.88,372.43) and (237.93,371.52) .. (237.92,370.36) .. controls (237.9,369.21) and (238.82,368.27) .. (239.97,368.25) .. controls (241.12,368.23) and (242.07,369.15) .. (242.08,370.3) -- cycle ;

\draw (230.33,373.67) node [anchor=north west][inner sep=0.75pt]   [align=left] {0};
\draw (383.33,279.67) node [anchor=north west][inner sep=0.75pt]   [align=left] {$\Gamma_1$};
\draw (413.83,362.67) node [anchor=north west][inner sep=0.75pt]   [align=left] {$\Gamma_2$};
\draw (383.33,445.17) node [anchor=north west][inner sep=0.75pt]   [align=left] {$\Gamma_3$};

\end{tikzpicture}

\caption{The jump contours of the RH problem for $\Psi$.}
\label{fig:Psi}

\end{center}
\end{figure}
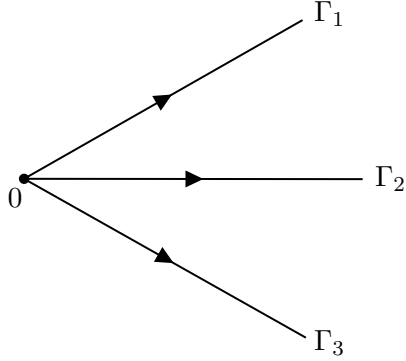

The Bessel kernel then admits the following representation:
\begin{align}\label{def:bessel}
K_{\nu}^{\Bes}(x, y) = \frac{1}{2 \pi \ii (x-y)} \begin{pmatrix}
-e^{-\frac{\pi \ii \nu}{2}} & e^{\frac{\pi \ii \nu}{2}}
\end{pmatrix}
\Psi_{+}(y)^{-1}\Psi_{+}(x)\begin{pmatrix}
e^{\frac{\pi \ii \nu}{2}} \\ e^{-\frac{\pi \ii \nu}{2}}
\end{pmatrix},
\end{align}
where the $+$ sign means taking limits from the upper half-plane.

\section{Asymptotic analysis of the RH problem for $\Psi$ with large $\nu$}
\label{sec:rhp}

In this section, we will perform a Deift-Zhou steepest descent analysis \cite{Deift1999} for the RH problem
for $\Psi$. It consists of a series of explicit and invertible transformations and the final goal is to
arrive at an RH problem tending to the identity matrix as $\nu \to +\infty$. Without loss of generality, we may assume that $\nu>0$ in what follows.

\subsection{First transformation: $\Psi \to Y$}
Due to the scaled variable \eqref{def:phi}, the first transformation is a scaling.
In addition, we multiply some constant matrices from the left to simplify the asymptotic behavior at infinity.

Define the matrix-valued function
\begin{align}\label{def:Y}
Y(z) = \nu^{\frac 12 \sigma_3}  \begin{pmatrix}
1 & 0\\
\frac{4 \nu^2+3}{8} & 1
\end{pmatrix}\Psi(\nu^2 z).
\end{align}
It is then readily seen from RH problem \ref{rhp:Psi} that $Y$ satisfies the following RH problem.
\begin{rhp}
\hfill
\begin{itemize}
\item[\rm (a)] $Y(z)$ is defined and  analytic for $z \in \mathbb{C} \setminus \left\{\cup_{j=1}^3 \Gamma_j \cup \{0\} \right\}$, where $\Gamma_i$, $i=1,2,3$, is defined in \eqref{def:Gammai}.
\item[\rm (b)] $Y(z)$ satisfies the jump condition
\begin{align}
Y_+(z)=Y_-(z) \begin{cases}
\begin{pmatrix}
1 & 0\\
e^{-\pi \ii \nu} & 1
\end{pmatrix}, & \qquad z \in \Gamma_1,\\
\begin{pmatrix}
0 & 1\\
-1 & 0
\end{pmatrix}, & \qquad z \in \Gamma_2,\\
\begin{pmatrix}
1 & 0\\
e^{\pi \ii \nu} & 1
\end{pmatrix}, & \qquad z \in \Gamma_3.
\end{cases}
\end{align}
\item[\rm (c)] As $z \to \infty$, we have
\begin{align}
Y(z) =  \left(I+\Boh(z^{-1})\right) (-z)^{-\frac{1}{4} \sigma_3} \frac{1}{\sqrt{2}} \begin{pmatrix}
1 & -1\\
1 & 1
\end{pmatrix} e^{(\nu(-z)^{1/2}-\frac{\pi \ii}{4}) \sigma_3},
\end{align}
where we take the principal branch for fractional exponents.
\end{itemize}
\end{rhp}

\subsection{Second transformation: $Y \to T$}
In the second transformation we apply contour deformations. The rays $\Gamma_1$ and $\Gamma_3$ emanating from the origin are replaced by their
parallel lines $\widetilde \Gamma_1$ and $\widetilde \Gamma_3$ emanating from $1$. Let $\textrm{I}$ and $\textrm{II}$ be two regions bounded
by $\Gamma_1\cup \widetilde \Gamma_1 \cup [0,1]$ and $\Gamma_3 \cup \widetilde \Gamma_3 \cup [0,1]$, respectively; see Figure \ref{fig:T} for an illustration. We now define
\begin{align}\label{def:T}
T(z) = Y(z) \begin{cases}
\begin{pmatrix}
1 & 0\\
e^{-\pi \ii \nu} & 1
\end{pmatrix}, & \qquad z \in \textrm{I},\\
\begin{pmatrix}
1 & 0\\
-e^{\pi \ii \nu} & 1
\end{pmatrix}, & \qquad z \in \textrm{II}, \\
I, & \qquad \textrm{elsewhere.}
\end{cases}
\end{align}

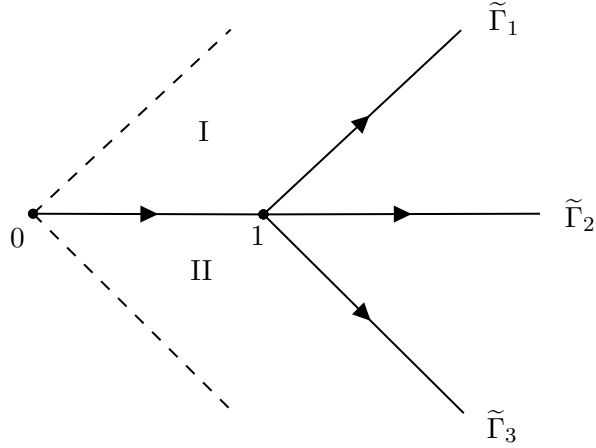
\begin{figure}[ht]
\begin{center}

\tikzset{every picture/.style={line width=0.75pt}} 

\begin{tikzpicture}[x=0.75pt,y=0.75pt,yscale=-1,xscale=1]

\draw    (100,118) -- (215,118.34) -- (353,118) ;
\draw [shift={(162.5,118.19)}, rotate = 180.17] [fill={rgb, 255:red, 0; green, 0; blue, 0 }  ][line width=0.08]  [draw opacity=0] (8.93,-4.29) -- (0,0) -- (8.93,4.29) -- cycle    ;
\draw [shift={(289,118.16)}, rotate = 179.86] [fill={rgb, 255:red, 0; green, 0; blue, 0 }  ][line width=0.08]  [draw opacity=0] (8.93,-4.29) -- (0,0) -- (8.93,4.29) -- cycle    ;
\draw    (215,118.34) -- (315,218.34) ;
\draw [shift={(268.53,171.88)}, rotate = 225] [fill={rgb, 255:red, 0; green, 0; blue, 0 }  ][line width=0.08]  [draw opacity=0] (8.93,-4.29) -- (0,0) -- (8.93,4.29) -- cycle    ;
\draw  [dash pattern={on 4.5pt off 4.5pt}]  (100,118) -- (200,218) ;
\draw  [dash pattern={on 4.5pt off 4.5pt}]  (100,118) -- (198.67,25.33) ;
\draw    (215,118.34) -- (313.66,25.67) ;
\draw [shift={(267.97,68.58)}, rotate = 136.8] [fill={rgb, 255:red, 0; green, 0; blue, 0 }  ][line width=0.08]  [draw opacity=0] (8.93,-4.29) -- (0,0) -- (8.93,4.29) -- cycle    ;

\draw (87,124) node [anchor=north west][inner sep=0.75pt]   [align=left] {$0$};
\draw (207,123) node [anchor=north west][inner sep=0.75pt]   [align=left] {$1$};
\draw (326,10) node [anchor=north west][inner sep=0.75pt]   [align=left] {$\widetilde\Gamma_1$};
\draw (364,109) node [anchor=north west][inner sep=0.75pt]   [align=left] {$\widetilde\Gamma_2$};
\draw (325,215) node [anchor=north west][inner sep=0.75pt]   [align=left] {$\widetilde\Gamma_3$};
\draw (181,70) node [anchor=north west][inner sep=0.75pt]   [align=left] {I};
\draw (177,140) node [anchor=north west][inner sep=0.75pt]   [align=left] {II};

\draw [fill={rgb, 255:red, 0; green, 0; blue, 0 }  ,fill opacity=1 ]  (215, 118.34) circle [x radius= 2, y radius= 2]   ;
\draw [fill={rgb, 255:red, 0; green, 0; blue, 0 }  ,fill opacity=1 ]  (215, 118.34) circle [x radius= 2, y radius= 2]   ;
\draw [fill={rgb, 255:red, 0; green, 0; blue, 0 }  ,fill opacity=1 ]  (100, 118) circle [x radius= 2, y radius= 2]   ;
\draw [fill={rgb, 255:red, 0; green, 0; blue, 0 }  ,fill opacity=1 ]  (100, 118) circle [x radius= 2, y radius= 2]   ;
\draw [fill={rgb, 255:red, 0; green, 0; blue, 0 }  ,fill opacity=1 ]  (215, 118.34) circle [x radius= 2, y radius= 2]   ;
\draw [fill={rgb, 255:red, 0; green, 0; blue, 0 }  ,fill opacity=1 ]  (215, 118.34) circle [x radius= 2, y radius= 2]   ;
\end{tikzpicture}

   \caption{The jump contour $\Gamma_T$ of the RH problem for $T$.}
   \label{fig:T}
\end{center}
\end{figure}
It is easily seen the following RH problem for $T$.
\begin{rhp}\label{rhp:T}
\hfill
\begin{itemize}
\item [\rm (a)] $T(z)$ is defined and analytic in $\mathbb{C} \setminus \Gamma_T$, where
\begin{equation}\label{def:GammaT}
\Gamma_T:= \cup_{i=1}^3\widetilde \Gamma_i \cup [0,1],
\end{equation}
with
\begin{align}
\widetilde\Gamma_1:=e^{\frac{\pi \ii}{3}}(1, +\infty), \qquad \widetilde\Gamma_2:=(1, +\infty), \qquad \widetilde\Gamma_3:=e^{-\frac{\pi \ii}{3}}(1, +\infty);
\end{align}
see Figure \ref{fig:T}.
\item [\rm (b)] $T(z)$ satisfies the jump condition
\begin{align}
T_+(z) = T_-(z) \begin{cases}
\begin{pmatrix}
1 & 0\\
e^{-\pi \ii \nu} & 1
\end{pmatrix}, & \qquad z \in \widetilde \Gamma_1,\\
\begin{pmatrix}
0 & 1\\
-1 & 0
\end{pmatrix}, & \qquad z \in \widetilde \Gamma_2,\\
\begin{pmatrix}
1 & 0\\
e^{\pi \ii \nu} & 1
\end{pmatrix}, & \qquad z \in \widetilde \Gamma_3,\\
\begin{pmatrix}
e^{-\pi \ii \nu} & 1\\
0 & e^{\pi \ii \nu}
\end{pmatrix}, & \qquad z \in (0,1).
\end{cases}
\end{align}
\item[\rm (c)] As $z \to \infty$, we have
\begin{align}
T(z) =  \left(I+\Boh(z^{-1})\right) (-z)^{-\frac{1}{4} \sigma_3} \frac{1}{\sqrt{2}} \begin{pmatrix}
1 & -1\\
1 & 1
\end{pmatrix} e^{(\nu(-z)^{1/2}-\frac{\pi \ii}{4}) \sigma_3}.
\end{align}
\end{itemize}
\end{rhp}

\subsection{Third transformation: $T \to S$}
In order to normalize the behavior at infinity, we apply the third transformation $T \to S$ by introducing the so-called $g$-function:
\begin{align}\label{def:g}
g(z) := - (1-z)^{\frac 12} + \frac 12 \ln \left(\frac{1 +  (1-z)^{1/2}}{1- (1-z)^{1/2}}\right)\pm \frac{\pi \ii}{2}, \quad \pm \Im z > 0.
\end{align}
As before, we take a cut along $[1, +\infty)$ for $(1-z)^{\frac 12}$. The following proposition of $g$ is immediate from its definition.
\begin{proposition}\label{pro:g}
\hfill
\begin{itemize}
\item [\rm (i)] The function $g(z)$ is analytic in $\mathbb{C} \setminus [0, +\infty)$.
\item [\rm (ii)] For $z \in (-\infty, 0)$, we have
\begin{align}
g(z) = -\sqrt{1-z} + \frac{1}{2} \ln \left(\frac{1+\sqrt{1-z}}{\sqrt{1-z}-1}\right).
\end{align}
\item [\rm (iii)]For $z \in (0,1)$, we have
\begin{align}\label{g01}
g_{\pm}(z) = -\sqrt{1-z} + \frac{1}{2} \ln \left(\frac{1+\sqrt{1-z}}{1-\sqrt{1-z}}\right)\pm \frac{\pi \ii}{2}.
\end{align}
\item[\rm (iv)] For $z \in (1, +\infty)$, we have
\begin{align}
g_{\pm}(z) = \pm \ii \sqrt{z-1} \pm \frac{\ii }{2} \arg \left(\frac{1- \ii \sqrt{z-1}}{1+\ii\sqrt{z-1}}\right)\pm \frac{\pi \ii}{2}.
\end{align}
\item [\rm (v)]As $z \to \infty$, we have
\begin{align}
g(z) = - \left(-z\right)^{\frac 12} + \frac 12 \left(-z\right)^{-\frac 12} + \Boh(z^{-1}).
\end{align}
\end{itemize}
\end{proposition}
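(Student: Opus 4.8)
The five assertions are handled almost independently; the only real work is a careful bookkeeping of the branches of $(1-z)^{1/2}$ and of the logarithm in \eqref{def:g}, plus one short computation. I would begin by recording the identity $g'(z)=-\tfrac{(1-z)^{1/2}}{2z}$: writing $w=(1-z)^{1/2}$, so that $w^2=1-z$ and $w'(z)=-\tfrac1{2w}$, termwise differentiation of \eqref{def:g} gives $g'(z)=\bigl(-1+\tfrac1{2(1+w)}+\tfrac1{2(1-w)}\bigr)w'(z)$, and since $\tfrac1{2(1+w)}+\tfrac1{2(1-w)}=\tfrac1{1-w^2}=\tfrac1z$ this collapses to $\bigl(\tfrac1z-1\bigr)\bigl(-\tfrac1{2w}\bigr)=-\tfrac{w}{2z}$. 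In particular $g'$ is analytic and single-valued on the simply connected set $\C\setminus[0,+\infty)$, which I use for (i) and (v).

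For (i), each of the two pieces of \eqref{def:g} is already analytic on its open half-plane. If $\Im z>0$ then $1-z$ lies in the open lower half-plane, so $w=(1-z)^{1/2}$ lies in the open fourth quadrant; for such $w$ one checks that $1+w$ stays in the right half-plane, $1-w$ in the upper half-plane, and $\tfrac{1+w}{1-w}$ off $(-\infty,0]$ (the relation $\tfrac{1+w}{1-w}\in(-\infty,0]$ would force $w\in(-\infty,-1]\cup(1,+\infty)$, excluded), so the principal logarithm is analytic there; the lower half-plane is symmetric. It then remains to glue the two pieces across $(-\infty,0)$, and this is exactly the content of (ii): the two boundary values there are complex conjugates of one another (since $g(\bar z)=\overline{g(z)}$, the branch cuts lying on the real axis) and both equal the real expression in (ii), hence they coincide. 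Thus $g$ is continuous on $\C\setminus[0,+\infty)$ and analytic off $\R$, so analytic on all of $\C\setminus[0,+\infty)$ by Morera's theorem.

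For (ii)–(iv) I would take boundary values and track arguments. On $(0,1)$, $(1-z)^{1/2}$ is continuous and equals $\sqrt{1-z}\in(0,1)$ while $\tfrac{1+\sqrt{1-z}}{1-\sqrt{1-z}}>0$, so the logarithm is continuous and (iii) follows with the $\pm\tfrac{\pi\ii}{2}$ surviving verbatim. On $(-\infty,0)$, $(1-z)^{1/2}\to\sqrt{1-z}>1$, so $\tfrac{1+\sqrt{1-z}}{1-\sqrt{1-z}}<0$; tracking the limit $\Im z\to0^+$ shows this quotient is approached from the lower side, whence $\ln\tfrac{1+w}{1-w}\to\ln\tfrac{1+\sqrt{1-z}}{\sqrt{1-z}-1}-\ii\pi$, and the resulting $-\tfrac{\ii\pi}{2}$ cancels the $+\tfrac{\ii\pi}{2}$ in \eqref{def:g}, giving the real expression in (ii); the limit from below is its conjugate, hence the same. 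For (iv), when $z\to x>1$ from above, $1-z$ approaches the negative axis from below, so $(1-z)^{1/2}\to-\ii\sqrt{x-1}$ and $\tfrac{1+w}{1-w}\to\tfrac{1-\ii\sqrt{x-1}}{1+\ii\sqrt{x-1}}$, a point of modulus $1$ in the lower half-plane whose principal logarithm is $\ii\arg\tfrac{1-\ii\sqrt{x-1}}{1+\ii\sqrt{x-1}}$; reassembling gives the stated $g_+$, and $g_-$ is its conjugate.

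For (v) I would expand $g'(z)=-\tfrac{(1-z)^{1/2}}{2z}=\tfrac12(-z)^{-1/2}+\tfrac14(-z)^{-3/2}+\Boh\bigl((-z)^{-5/2}\bigr)$ as $z\to\infty$ and integrate (legitimate on $\C\setminus[0,+\infty)$ by (i)), using $\int\tfrac12(-z)^{-1/2}\,\mathrm dz=-(-z)^{1/2}$ and $\int\tfrac14(-z)^{-3/2}\,\mathrm dz=\tfrac12(-z)^{-1/2}$, to obtain $g(z)=-(-z)^{1/2}+\tfrac12(-z)^{-1/2}+\Boh(z^{-3/2})+c$; letting $z\to-\infty$ along the real axis in the formula (ii), where $\tfrac12\ln\tfrac{1+\sqrt{1-z}}{\sqrt{1-z}-1}\to0$, forces $c=0$, which gives the claimed expansion (with error in fact $\Boh(z^{-3/2})$). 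I expect the branch analysis in (i)–(iv) to be the main obstacle: one must determine precisely from which side of its cut the logarithm's argument is approached on the rays $(-\infty,0)$ and $(1,+\infty)$, so that the spurious $\pm\ii\pi$ it produces cancels the $\pm\tfrac{\pi\ii}{2}$ deliberately built into \eqref{def:g}. The derivative identity and the expansion at infinity are then routine.
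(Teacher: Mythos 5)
Your proof is correct, and the branch-tracking verification you carry out is exactly the computation the paper has in mind: the paper offers no proof at all, stating only that the proposition ``is immediate from its definition.'' Your derivative identity $g'(z)=-\tfrac{(1-z)^{1/2}}{2z}$ also matches the formula \eqref{eq:g+deri} used later in the paper, so everything is consistent.
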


\begin{figure}[h]
 \centering
  \includegraphics[scale=.6]{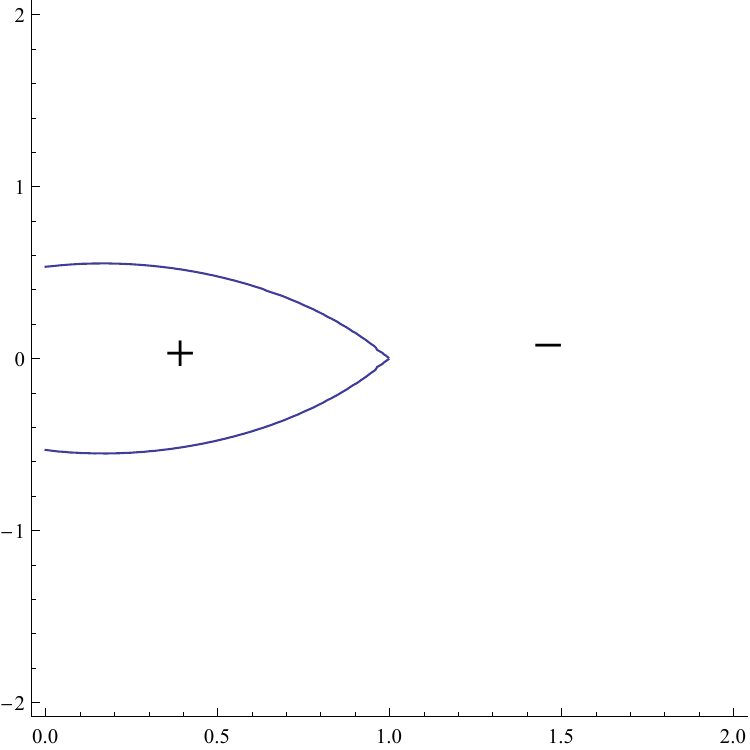}
  \caption{Image of $\Re g$: the solid line is the contour of $\Re{g(z)}=0$, the ``$-$" sign is the region where $\Re{g(z)}<0$ and the ``$+$" sign shows the region where $\Re{g(z)}>0$.}
  \label{fig:g}
\end{figure}

By setting
\begin{align}\label{def:S}
S(z) = T(z) e^{\nu g(z) \sigma_3},
\end{align}
it is readily seen from Proposition \ref{pro:g} and RH problem \ref{rhp:T} that $S$ satisfies the RH problem as follows.
\begin{rhp}
\hfill
\begin{itemize}
\item [\rm (a)] $S(z)$ is defined and analytic in $\mathbb{C} \setminus \Gamma_T$, where the contour $\Gamma_T$ is defined in \eqref{def:GammaT}.
\item [\rm (b)] $S(z)$ satisfies the jump condition
\begin{align}
S_+(z) = S_-(z) \begin{cases}
\begin{pmatrix}
1 & 0\\
e^{2 \nu g(z)-\pi \ii \nu} & 1
\end{pmatrix}, & \qquad z \in \widetilde \Gamma_1,\\
\begin{pmatrix}
0 & 1\\
-1 & 0
\end{pmatrix}, & \qquad z \in \widetilde \Gamma_2,\\
\begin{pmatrix}
1 & 0\\
e^{2 \nu g(z)+\pi \ii \nu} & 1
\end{pmatrix}, & \qquad z \in \widetilde \Gamma_3,\\
\begin{pmatrix}
1 & e^{2 \nu \left(\sqrt{1-z} - \frac 12 \ln \left(\frac{1+\sqrt{1-z}}{1-\sqrt{1-z}}\right)\right)}\\
0 & 1
\end{pmatrix}, & \qquad z \in (0,1).
\end{cases}
\end{align}
\item[\rm (c)] As $z \to \infty$, we have
\begin{align}
S(z) =  \left(I+\Boh(z^{-1})\right) (-z)^{-\frac{1}{4} \sigma_3} \frac{1}{\sqrt{2}} \begin{pmatrix}
1 & -1\\
1 & 1
\end{pmatrix} e^{-\frac{\pi \ii}{4} \sigma_3}.
\end{align}
\end{itemize}
\end{rhp}

\subsection{Global parametrix}
As $\nu \to +\infty$, from the image of $\Re g$ depicted in Figure \ref{fig:g}, we conclude that all the jump matrices of $S$ tend to $I$ exponentially fast except for that along $(1, +\infty)$. Ignoring the exponential small terms in the jump matrices for $S$, we come to the following global parametrix.
\begin{rhp}
\hfill
\begin{itemize}
\item [\rm (a)] $N(z)$ is defined and analytic in $\mathbb{C} \setminus [1, +\infty)$.
\item [\rm (b)] $N(z)$ satisfies the jump condition
\begin{align}\label{jump:N}
N_+(z) = N_-(z) \begin{pmatrix}
0 & 1\\
-1 & 0
\end{pmatrix}, \qquad z\in[1,+\infty).
\end{align}
\item[\rm (c)] As $z \to \infty$, we have
\begin{align}
N(z) =  \left(I+\Boh(z^{-1})\right) (-z)^{-\frac{1}{4} \sigma_3} \frac{1}{\sqrt{2}} \begin{pmatrix}
1 & -1\\
1 & 1
\end{pmatrix} e^{-\frac{\pi \ii}{4} \sigma_3}.
\end{align}
\end{itemize}
\end{rhp}
An explicit solution to the RH problem for $N$ is given by
\begin{align}\label{def:N}
N(z) =(1-z)^{-\frac{1}{4} \sigma_3} \frac{1}{\sqrt{2}} \begin{pmatrix}
1 & -1\\
1 & 1
\end{pmatrix} e^{-\frac{\pi \ii}{4} \sigma_3}.
\end{align}

\subsection{Local parametrix}
Since the jump matrices for $S$ and $N$ are not uniformly close to each other near the point $z=1$, we next construct local parametrix near this point. In a disc $D(1, \varepsilon)$ centered at 1 with certain fixed radius $0<\varepsilon<1$, we seek a $2 \times 2$ matrix-valued function $P(z)$ satisfying an RH problem as follows.
\begin{rhp}\label{rhp:P1}
\hfill
\begin{itemize}
\item [\rm (a)] $P(z)$ is defined and analytic in $D(1, \varepsilon) \setminus \Gamma_T$, where the contour $\Gamma_T$ is defined in \eqref{def:GammaT}.
\item [\rm (b)] $P(z)$ satisfies the jump condition
\begin{align}\label{jump:P1}
P_+(z) = P_-(z) \begin{cases}
\begin{pmatrix}
1 & 0\\
e^{2 \nu g(z)-\pi \ii \nu} & 1
\end{pmatrix}, & \qquad z \in D(1, \varepsilon) \cap \widetilde \Gamma_1,\\
\begin{pmatrix}
0 & 1\\
-1 & 0
\end{pmatrix}, & \qquad z \in D(1, \varepsilon) \cap \widetilde \Gamma_2,\\
\begin{pmatrix}
1 & 0\\
e^{2 \nu g(z)+\pi \ii \nu} & 1
\end{pmatrix}, & \qquad z \in D(1, \varepsilon) \cap \widetilde \Gamma_3,\\
\begin{pmatrix}
1 & e^{2 \nu \left(\sqrt{1-z} - \frac 12 \ln \left(\frac{1+\sqrt{1-z}}{1-\sqrt{1-z}}\right)\right)}\\
0 & 1
\end{pmatrix}, & \qquad z \in D(1, \varepsilon) \cap (0,1).
\end{cases}
\end{align}
\item[\rm (c)] As $\nu \to \infty$, we have
\begin{align}\label{matching}
P(z) =  \left(I+\Boh(\nu^{-1})\right) N(z), \qquad z\in \partial D(1,\varepsilon),
\end{align}
where $N$ is given in \eqref{def:N}.
\end{itemize}
\end{rhp}
This local parametrix can be constructed by using the Airy parametrix $\Phi^{(\Ai)}$ introduced in \hyperref[airy]{Appendix A}. To do this, we introduce the function:
\begin{align}\label{def:f}
 f(z) &= \left(\frac{3g(z)}{2} \mp \frac{3\pi \ii}{4}\right)^{\frac 23}, \qquad \pm \Im z > 0 \\
 &= -2^{-\frac 23} (z-1) \left(1 - \frac{2}{5} (z-1) +\frac{43}{175}(z-1)^2 +\Boh \left((z-1)^3\right)\right), \qquad z \to 1.\nonumber
\end{align}
It is easily obtained that
\begin{align}
f(1)=0, \qquad f'(1)=-2^{-\frac 23}<0.
\end{align}
We then set
\begin{align}\label{def:P1}
P(z) = E(z) \Phi^{(\Ai)}\left(\nu^{\frac 23}f(z)\right) e^{\nu \left(g(z) \mp \frac{\pi \ii}{2}\right)\sigma_3} \sigma_3
\end{align}
with
\begin{align}\label{def:E}
E(z) := N(z) \sigma_3 \frac{1}{\sqrt{2}} \begin{pmatrix}
1 & -\ii\\
-\ii & 1
\end{pmatrix}f(z)^{\frac 14 \sigma_3}\nu^{\frac 16 \sigma_3}.
\end{align}

\begin{proposition}\label{pro:P}
The matrix-valued function $P(z)$ defined in \eqref{def:P1} solves RH problem \ref{rhp:P1}.
\end{proposition}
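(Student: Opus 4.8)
The plan is to verify the three items of RH problem \ref{rhp:P1} directly for the function $P$ in \eqref{def:P1}; this is the standard Airy local parametrix argument and all of it is routine computation, organized in three steps.

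\emph{Step 1: analyticity and invertibility of $E$.} The factor $E$ is the only place in \eqref{def:P1} that could introduce spurious jumps or degeneracies inside $D(1,\varepsilon)$, so the first task is to rule these out. The arc to watch is $D(1,\varepsilon)\cap\widetilde\Gamma_2=(1,1+\varepsilon)$, across which both $N$ (through $(1-z)^{-\frac14\sigma_3}$) and $f(z)^{\frac14\sigma_3}$ are discontinuous. From \eqref{def:f}, $f$ in fact extends analytically to the whole disc $D(1,\varepsilon)$: the quantity $\tfrac32 g(z)\mp\tfrac{3\pi\ii}{4}=\tfrac32\bigl(g(z)\mp\tfrac{\pi\ii}{2}\bigr)$ vanishes to order $\tfrac32$ at $z=1$ and $\tfrac32\cdot\tfrac23=1$, so its $\tfrac23$-th power has no branch point there. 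Moreover $f$ is conformal near $z=1$ with $f(1)=0$, $f'(1)=-2^{-\frac23}<0$, and factors as $f(z)=(1-z)\tilde f(z)$ with $\tilde f$ analytic and nonvanishing; hence $f(z)^{\frac14\sigma_3}=(1-z)^{\frac14\sigma_3}\tilde f(z)^{\frac14\sigma_3}$ for compatible branch choices. Substituting \eqref{def:N} into \eqref{def:E}, the constant matrix $\tfrac1{\sqrt2}\begin{pmatrix}1&-1\\1&1\end{pmatrix}e^{-\frac{\pi\ii}{4}\sigma_3}\sigma_3\tfrac1{\sqrt2}\begin{pmatrix}1&-\ii\\-\ii&1\end{pmatrix}$ caught between $(1-z)^{-\frac14\sigma_3}$ and $(1-z)^{\frac14\sigma_3}$ turns out to be \emph{diagonal} (a short computation gives $\diag(e^{-\pi\ii/4},-e^{\pi\ii/4})$), so it commutes with $(1-z)^{\pm\frac14\sigma_3}$ and the two $(1-z)^{\pm\frac14\sigma_3}$ cancel. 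Therefore $E$ becomes a product of factors each analytic and invertible on $D(1,\varepsilon)$ (equivalently $\det E\equiv-1$), and in particular has no jump on $(1,1+\varepsilon)$.

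\emph{Step 2: the jump relation \eqref{jump:P1}.} Granting the above, $P$ inherits its jumps only from $\Phi^{(\Ai)}(\nu^{\frac23}f(z))$ and from $e^{\nu(g(z)\mp\frac{\pi\ii}{2})\sigma_3}$. Since $f$ is conformal near $z=1$ with $f'(1)<0$ real, it maps the four arcs of $\Gamma_T$ inside the disc — $\widetilde\Gamma_1,\widetilde\Gamma_2,\widetilde\Gamma_3$ and $(0,1)$ — onto the four rays $\arg\zeta\in\{-\tfrac{2\pi}{3},\pi,\tfrac{2\pi}{3},0\}$ forming the jump contour of $\Phi^{(\Ai)}$ (the rotation by $\pi$ comes from $f'(1)<0$). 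On each arc one substitutes the matching constant jump matrix of $\Phi^{(\Ai)}$ from Appendix \ref{airy} and conjugates by $e^{\nu(g(z)\mp\frac{\pi\ii}{2})\sigma_3}\sigma_3$, with the $\pm$ chosen per half-plane. The identity $\tfrac23 f(z)^{3/2}=g(z)\mp\tfrac{\pi\ii}{2}$ — which is \eqref{def:f} rewritten — together with Proposition \ref{pro:g} then converts the conjugated jump into the prescribed one: on $(0,1)$ one invokes \eqref{g01}, in particular $g_+-g_-=\pi\ii$ and $-2\nu\bigl(g_\pm(z)\mp\tfrac{\pi\ii}{2}\bigr)=2\nu\bigl(\sqrt{1-z}-\tfrac12\ln\tfrac{1+\sqrt{1-z}}{1-\sqrt{1-z}}\bigr)$; on $\widetilde\Gamma_2$ one uses that $g\mp\tfrac{\pi\ii}{2}$ changes sign across the cut, producing the constant antidiagonal jump; and on $\widetilde\Gamma_1,\widetilde\Gamma_3$ the exponents $2\nu g(z)\mp\pi\ii\nu$ in \eqref{jump:P1} arise directly from $e^{2\nu(g(z)\mp\frac{\pi\ii}{2})}$. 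Running through the four arcs reproduces \eqref{jump:P1}.

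\emph{Step 3: the matching \eqref{matching}.} On $\partial D(1,\varepsilon)$ the variable $\zeta:=\nu^{\frac23}f(z)$ stays bounded away from $0$, so $|\zeta|$ is of order $\nu^{\frac23}$, and the large-$\zeta$ expansion of $\Phi^{(\Ai)}$ from Appendix \ref{airy}, written as $\Phi^{(\Ai)}(\zeta)=\zeta^{-\frac14\sigma_3}N_0\bigl(I+\Boh(\zeta^{-3/2})\bigr)e^{-\frac23\zeta^{3/2}\sigma_3}$ with $N_0$ the relevant constant matrix, holds with error $\Boh(\nu^{-1})$ uniformly for $z\in\partial D(1,\varepsilon)$. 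Feeding this into \eqref{def:P1}: the two scalar exponentials cancel exactly by $\tfrac23\nu f(z)^{3/2}=\nu(g(z)\mp\tfrac{\pi\ii}{2})$, the diagonal factors $\nu^{\frac16\sigma_3}f(z)^{\frac14\sigma_3}$ inside $E(z)$ cancel against $\zeta^{-\frac14\sigma_3}=\nu^{-\frac16\sigma_3}f(z)^{-\frac14\sigma_3}$, and because $E$ is built so that $E(z)\zeta^{-\frac14\sigma_3}N_0\sigma_3=N(z)$, what is left is $P(z)=N(z)\bigl(I+\Boh(\nu^{-1})\bigr)$. Conjugating the correction by $N(z)$ — which, with $N(z)^{-1}$, is bounded on $\partial D(1,\varepsilon)$ — gives $P(z)=\bigl(I+\Boh(\nu^{-1})\bigr)N(z)$, that is, \eqref{matching}.

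All three steps are elementary. The single structural input is the diagonality used in Step 1 — precisely the reason for inserting $\tfrac1{\sqrt2}\begin{pmatrix}1&-\ii\\-\ii&1\end{pmatrix}$ into the definition of $E$ — while the part demanding the most care is Step 2: keeping the cuts of $f$, $g$ and $(1-z)^{1/2}$, and the $\pm$ prescriptions for $\pm\Im z>0$, consistent across the four arcs so that the conjugated Airy jumps land exactly on \eqref{jump:P1}.
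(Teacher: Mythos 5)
Your proposal is correct and follows essentially the same route as the paper: analyticity of $E$ on $D(1,\varepsilon)$, jumps of $P$ inherited from the Airy parametrix via the conformal map $f$, and the matching on $\partial D(1,\varepsilon)$ from \eqref{infty:Ai} combined with the conjugation $N\sigma_3(I+\Boh(\nu^{-1}))\sigma_3 N^{-1}$. The only cosmetic difference is in Step 1, where you establish analyticity of $E$ by factoring $f(z)=(1-z)\tilde f(z)$ and observing that the constant matrix sandwiched between $(1-z)^{\mp\frac14\sigma_3}$ is diagonal, whereas the paper verifies directly that $E_-^{-1}E_+=I$ on $(1,1+\varepsilon)$ and then checks that $z=1$ is a removable singularity via the local expansion \eqref{E1}; the two computations are equivalent.
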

\begin{proof}
We first show the prefactor $E(z)$ is analytic near $z=1$. According to its definition in \eqref{def:E}, the only possible jump is on $(1, 1+ \varepsilon)$. It follows from \eqref{jump:N} and \eqref{def:f} that, if $z \in (1, 1+ \varepsilon)$,
\begin{align}
E_-(z)^{-1}E_+(z) &= \nu^{-\frac 16 \sigma_3}f_-(z)^{-\frac 14 \sigma_3}\frac{1}{\sqrt{2}} \begin{pmatrix}
1 & \ii\\
\ii & 1
\end{pmatrix} \sigma_3 \begin{pmatrix}
0 & 1\\
-1 & 0
\end{pmatrix}\sigma_3\frac{1}{\sqrt{2}} \begin{pmatrix}
1 & -\ii\\
-\ii & 1
\end{pmatrix}f_+(z)^{\frac 14 \sigma_3}\nu^{\frac 16 \sigma_3}\\
&=\nu^{-\frac 16 \sigma_3}f_-(z)^{-\frac 14 \sigma_3}\begin{pmatrix}
\ii & 0\\
0 & -\ii
\end{pmatrix}f_+(z)^{\frac 14 \sigma_3}\nu^{\frac 16 \sigma_3}=I,\nonumber
\end{align}
since $f_+(z)=e^{-2 \pi \ii}f_-(z)$ for $z \in (1, 1+ \varepsilon)$. Thus, $E(z)$ is analytic in $D(1,\varepsilon) \setminus \{1\}$. Note that, as $z \to 1$, by using \eqref{def:f},
\begin{align}\label{E1}
E(z)=(2h_{\nu})^{-\frac 14 \sigma_3}e^{-\frac{\pi \ii}{4} \sigma_3} \sigma_3 \left(I-\frac{\sigma_3}{10}(z-1)+\begin{pmatrix}
\frac{13}{280} & 0\\
0 & -\frac{51}{1400}
\end{pmatrix}(z-1)^2+\Boh((z-1)^3)\right)
\end{align}
with $h_{\nu}$ given in \eqref{def:h}, we conclude that $z=1$ is a removable singularity. The jump conditions of $P(z)$ in \eqref{jump:P1} follows directly from the analyticity of $E(z)$ and \eqref{jump:Airy}. Finally,
as $\nu \to +\infty$, we apply \eqref{infty:Ai} and obtain after a straightforward computation that
\begin{align}
P(z)N(z)^{-1} = N(z) \sigma_3 \left(I + \Boh (\nu^{-1})
\right) \sigma_3 N(z)^{-1}=I + \Boh (\nu^{-1}), \qquad z\in \partial D(1,\varepsilon).
\end{align}

This completes the proof of Proposition \ref{pro:P}.
\end{proof}

\subsection{Final transformation}
We define the final transformation
\begin{align}\label{def:R}
R(z)=\begin{cases}
S(z)P(z)^{-1}, & \qquad z \in D(1, \varepsilon),\\
S(z) N(z)^{-1}, & \qquad \textrm{elsewhere}.
\end{cases}
\end{align}
It is then readily seen that $R(z)$ satisfies the following RH problem.
\begin{rhp}\label{rhp:R}
\hfill
\begin{itemize}
\item[\rm (a)] $R(z)$ is defined and analytic in $\mathbb{C} \setminus \Gamma_R$, where
$$
\Gamma_R:=\partial D(1,\varepsilon) \cup \Gamma_T \setminus D(1,\varepsilon);
$$
see Figure \ref{fig:R} for an illustration.
\item[\rm (b)] $R(z)$ satisfies the jump condition
\begin{align}
R_+(z)=R_-(z) J_R(z), \qquad z\in \Gamma_R,
\end{align}
where
\begin{align}\label{jump:R}
J_R(z) = \begin{cases}
P(z)N(z)^{-1}, & \qquad z \in \partial D(1, \varepsilon),\\
N(z)S(z)N(z)^{-1}, & \qquad z \in \Gamma_R \setminus \partial D(1, \varepsilon).
\end{cases}
\end{align}
\item[\rm (c)] As $z \to \infty$, we have
\begin{align}
R(z) = I + \Boh(z^{-1}).
\end{align}
\end{itemize}
\end{rhp}

\begin{figure}[ht]
\begin{center}

\tikzset{every picture/.style={line width=0.75pt}} 

\begin{tikzpicture}[x=0.75pt,y=0.75pt,yscale=-1,xscale=1]

\draw   (256,121) .. controls (256,107.19) and (267.19,96) .. (281,96) .. controls (294.81,96) and (306,107.19) .. (306,121) .. controls (306,134.81) and (294.81,146) .. (281,146) .. controls (267.19,146) and (256,134.81) .. (256,121) -- cycle ;
\draw    (166,120) -- (256,121) ;
\draw [shift={(216,120.56)}, rotate = 180.64] [fill={rgb, 255:red, 0; green, 0; blue, 0 }  ][line width=0.08]  [draw opacity=0] (8.93,-4.29) -- (0,0) -- (8.93,4.29) -- cycle    ;
\draw    (297,140) -- (381,221) ;
\draw [shift={(342.6,183.97)}, rotate = 223.96] [fill={rgb, 255:red, 0; green, 0; blue, 0 }  ][line width=0.08]  [draw opacity=0] (8.93,-4.29) -- (0,0) -- (8.93,4.29) -- cycle    ;
\draw    (381,18) -- (297,101) ;
\draw [shift={(343.62,54.93)}, rotate = 135.34] [fill={rgb, 255:red, 0; green, 0; blue, 0 }  ][line width=0.08]  [draw opacity=0] (8.93,-4.29) -- (0,0) -- (8.93,4.29) -- cycle    ;
\draw  [fill={rgb, 255:red, 0; green, 0; blue, 0 }  ,fill opacity=1 ] (281,121) .. controls (281,120.26) and (281.6,119.67) .. (282.33,119.67) .. controls (283.07,119.67) and (283.67,120.26) .. (283.67,121) .. controls (283.67,121.74) and (283.07,122.33) .. (282.33,122.33) .. controls (281.6,122.33) and (281,121.74) .. (281,121) -- cycle ;
\draw  [fill={rgb, 255:red, 0; green, 0; blue, 0 }  ,fill opacity=1 ] (164,120) .. controls (164,119.26) and (164.6,118.67) .. (165.33,118.67) .. controls (166.07,118.67) and (166.67,119.26) .. (166.67,120) .. controls (166.67,120.74) and (166.07,121.33) .. (165.33,121.33) .. controls (164.6,121.33) and (164,120.74) .. (164,120) -- cycle ;
\draw    (281,96) -- (284,96) ;
\draw [shift={(287,96)}, rotate = 180] [fill={rgb, 255:red, 0; green, 0; blue, 0 }  ][line width=0.08]  [draw opacity=0] (8.93,-4.29) -- (0,0) -- (8.93,4.29) -- cycle    ;

\draw (280,124) node [anchor=north west][inner sep=0.75pt]   [align=left] {$1$};
\draw (386,14) node [anchor=north west][inner sep=0.75pt]   [align=left] {$\widetilde\Gamma_1$};
\draw (390,215) node [anchor=north west][inner sep=0.75pt]   [align=left] {$\widetilde\Gamma_3$};
\draw (159,123) node [anchor=north west][inner sep=0.75pt]   [align=left] {$0$};

\end{tikzpicture}

   \caption{The jump contours of the RH problem for $R$.}
   \label{fig:R}
\end{center}
\end{figure}

For $z \in \Gamma_R \setminus \partial D(1, \varepsilon)$, we have the estimate
\begin{equation}
J_R (z)=I + \Boh(e^{-c \nu}), \qquad \nu \to +\infty,
\end{equation}
for some constant $c>0$.

For $z \in \partial D(1, \varepsilon)$, substituting the full expansion \eqref{asy:Ai} of $\Phi^{({\Ai})}$ into \eqref{def:P1} and \eqref{jump:R}, we have
\begin{align}
J_R (z)\sim I + \sum_{k=1}^{\infty} J_{R,k}(z)h_{\nu}^{\frac{3k}{2}}, \qquad \nu \to +\infty,
\end{align}
where
\begin{align}\label{Jk}
J_{R,k}(z)=\begin{cases}
-\frac{3^{k}}{2^{k/2}f(z)^{3k/2}}\begin{pmatrix}
0 & (1-z)^{-\frac 12} \mathfrak{u}_k\\
(1-z)^{\frac 12} \mathfrak{v}_k & 0
\end{pmatrix}, & \quad \textrm{for odd $k$, }\\
\frac{3^{k}}{2^{k/2}f(z)^{3k/2}}\begin{pmatrix}
\mathfrak{u}_k & 0\\
0 & \mathfrak{v}_k
\end{pmatrix}, & \quad \textrm{for even $k$, }
\end{cases}
\end{align}
with $f(z)$ and $\mathfrak{u}_k$, $\mathfrak{v}_k$ given in \eqref{def:f} and \eqref{def:ukvk}, respectively.
By a standard argument \cite{Deift1999, Deift1993}, we conclude that, as $\nu \to +\infty$,
\begin{align}\label{eq:Rexp}
R(z) \sim I + \sum_{k=1}^{\infty} R_k(z)h_{\nu}^{\frac{3k}{2}}
\end{align}
uniformly for $z \in \mathbb{C} \setminus \Gamma_R$. A combination of \eqref{eq:Rexp} and RH problem \ref{rhp:R} shows that $R_1$ satisfies
\begin{rhp}
\hfill
\begin{itemize}
\item[\rm (a)] $R_1(z)$ is defined and analytic in $\mathbb{C} \setminus \partial D(1, \varepsilon)$.
\item[\rm (b)] $R_1(z)$ satisfies the jump condition
\begin{align}
R_{1,+}(z)=R_{1,-}(z) +J_{R,1}(z), \qquad z \in \partial D(1, \varepsilon),
\end{align}
where
\begin{align}\label{jump:R1}
J_{R,1}(z) =-\frac{\sqrt{2}}{48 f(z)^{3/2}} \begin{pmatrix}
0& 5(1-z)^{- \frac 12}\\
-7 (1-z)^{\frac 12} & 0
\end{pmatrix}.
\end{align}
\item[\rm (c)] As $z \to \infty$, we have
$R_1(z) = \Boh(z^{-1})$.
\end{itemize}
\end{rhp}
From the local behavior of $f(z)$ near $z=1$ given in \eqref{def:f}, we obtain that
\begin{align}
J_{R,1}(z)=\frac{1}{(z-1)^2}\begin{pmatrix}
0 & -\frac{5 \sqrt{2}}{24}\\
0& 0
\end{pmatrix}-\frac{1}{z-1} \begin{pmatrix}
0 & \frac{\sqrt{2}}{8}\\
\frac{7\sqrt{2}}{24}& 0
\end{pmatrix}-\begin{pmatrix}
0 & -\frac{\sqrt{2}}{70}\\
\frac{7\sqrt{2}}{40}& 0
\end{pmatrix} + \Boh(z-1).
\end{align}
By Cauchy's residue theorem, we have
\begin{align}\label{def:R1}
R_1(z) &= \frac{1}{2 \pi \ii} \oint_{\partial D(1, \varepsilon)}\frac{J_{R,1}(s)}{z-s} \ud s\\
&=\begin{cases}
\frac{1}{(z-1)^2}\begin{pmatrix}
0 & -\frac{5 \sqrt{2}}{24}\\
0& 0
\end{pmatrix}-\frac{1}{z-1} \begin{pmatrix}
0 & \frac{\sqrt{2}}{8}\\
\frac{7\sqrt{2}}{24}& 0
\end{pmatrix}, \quad & z \in \mathbb{C} \setminus D(1, \varepsilon),\\
\frac{1}{(z-1)^2}\begin{pmatrix}
0 &-\frac{5 \sqrt{2}}{24}\\
0& 0
\end{pmatrix}-\frac{1}{z-1} \begin{pmatrix}
0 & \frac{\sqrt{2}}{8}\\
\frac{7\sqrt{2}}{24}& 0
\end{pmatrix}-J_{R,1}(z), \quad & z \in D(1, \varepsilon).
\end{cases}\nonumber
\end{align}
Similarly, $R_2$ satisfies the following RH problem.
\begin{rhp}
\hfill
\begin{itemize}
\item[\rm (a)] $R_2(z)$ is defined and analytic in $\mathbb{C} \setminus \partial D(1, \varepsilon)$.
\item[\rm (b)]  $R_2(z)$ satisfies the jump condition
\begin{align}
R_{2,+}(z)=R_{2,-}(z) +R_{1,-}(z)J_{R,1}(z)+J_{R,2}(z), \qquad z \in \partial D(1, \varepsilon),
\end{align}
where
\begin{align}\label{jump:R2}
J_{R,2}(z) =\frac{9}{2 f(z)^{3}} \begin{pmatrix}
\mathfrak{u}_2& 0\\
0 & \mathfrak{v}_2
\end{pmatrix}
\end{align}
with $\mathfrak{u}_2$ and $\mathfrak{v}_2$ given in \eqref{def:ukvk}.
\item[\rm (c)] As $z \to \infty$, we have
$R_2(z) = \Boh(z^{-1})$.
\end{itemize}
\end{rhp}
From \eqref{def:R1}, \eqref{jump:R1}, \eqref{jump:R2} and Cauchy's residue theorem, it follows that
\begin{align}
R_2(z) = \frac{1}{2 \pi \ii} \oint_{\partial D(1, \varepsilon)}\frac{R_{1,-}(s)J_{R,1}(s)+J_{R,2}(s)}{z-s} \ud s
\end{align}
is a diagonal matrix. For general $k \geq 3$, the functions $R_k$ are analytic in $\mathbb{C} \setminus \partial D(1, \varepsilon)$ with asymptotic behavior $\Boh (1/z)$ as $z \to \infty$, and satisfy
\begin{align}
R_{k,+}(z) = R_{k,-}(z) + \sum_{l=1}^k R_{k-l,-}(z) J_{R,l}(z), \qquad z \in \partial D(1, \varepsilon),
\end{align}
where the functions $J_{R,k}(z)$ are given in \eqref{Jk}. By Cauchy's residue theorem, we have
\begin{equation}\label{70}
R_k(z) = \frac{1}{2 \pi \ii} \oint_{\partial D(1, \varepsilon)} \sum_{l=1}^k R_{k-l,-}(s) J_{R,l}(s) \frac{\ud s}{z-s}.
\end{equation}
One can check that, by the structure of $J_{R,k}(z)$ and mathematical induction, each $R_k$ takes the following structure:
\begin{equation}
R_k(z) = \begin{cases}
\begin{pmatrix}
0 & \left(R_k(z)\right)_{12}\\
\left(R_k(z)\right)_{21} & 0
\end{pmatrix}, & \textrm{for odd $k$,}
\\
\begin{pmatrix}
\left(R_k(z)\right)_{11} & 0\\
0 & \left(R_k(z)\right)_{22}
\end{pmatrix}, & \textrm{for even $k$.}
\end{cases}
\end{equation}
This, together with \eqref{eq:Rexp}, gives us
\begin{align}\label{R1}
h_{\nu}^{\frac 14 \sigma_3} R(z) h_{\nu}^{-\frac 14 \sigma_3} &\sim I + \sum_{k=0}^{\infty} \left(h_{\nu}^{3k+1} \begin{pmatrix}
0 & 0\\
\left(R_{2k+1}(z)\right)_{21} & 0
\end{pmatrix}
+h_{\nu}^{3k+2} \begin{pmatrix}
0 & \left(R_{2k+1}(z)\right)_{12}\\
0 & 0
\end{pmatrix}\right.\\
&~~~\left.+h_{\nu}^{3k+3} \begin{pmatrix}
\left(R_{2k+2}(z)\right)_{11} & 0\\
0 & \left(R_{2k+2}(z)\right)_{22}
\end{pmatrix} \right), \qquad \nu \to +\infty.\nonumber
\end{align}
We are now ready to prove our main result.

\section{Proof of Theorem \ref{th:1}}\label{sec:proof}
Recall the RH characterization of the Bessel kernel given in \eqref{def:bessel}, we then follow the series of transformations $\Psi \to Y \to T \to S$ in \eqref{def:Y}, \eqref{def:T} and \eqref{def:S} to obtain that
\begin{multline}
\nu^2 K_{\nu}^{\Bes}(\nu^2 u, \nu^2 v) = \frac{1}{2 \pi \ii (u-v)} \begin{pmatrix}
0 &1
\end{pmatrix}e^{\nu (g_+(v)-\pi \ii/2) \sigma_3}S_+(v)^{-1}S_+(u)
\\
\times e^{-\nu (g_+(u)-\pi \ii/2)  \sigma_3}\begin{pmatrix}
1\\0
\end{pmatrix}, \qquad u,v>0,
\end{multline}
where $g$ is given in \eqref{def:g}. In what follows, we split our discussions into different cases based on different ranges of $u$ and $v$.

If $u \in (0, 1- \varepsilon]$ and $v \in (1- \varepsilon, 1+ \varepsilon)$, applying the final transformation \eqref{def:R} and \eqref{def:P1} shows that
\begin{align}\label{xto1}
&\nu^2 K_{\nu}^{\Bes}(\nu^2 u, \nu^2 v) \\
&= \frac{1}{2 \pi \ii (u-v)} \begin{pmatrix}
0 &1
\end{pmatrix}e^{\nu (g_+(v)-\pi \ii/2) \sigma_3}P_+(v)^{-1}R(v)^{-1}R_+(u)N(u)e^{-\nu (g_+(u)-\pi \ii/2)  \sigma_3}\begin{pmatrix}
1\\0
\end{pmatrix}\nonumber\\
&=-\frac{e^{-\nu (g_+(u)-\pi \ii/2)}}{\ii \sqrt{2 \pi} (u-v)} \begin{pmatrix}
\ii \Ai'(\nu^{\frac 23}f(v)) &\Ai (\nu^{\frac 23}f(v))
\end{pmatrix}E(v)^{-1}R(v)^{-1}R_+(u)N(u)\begin{pmatrix}
1\\0
\end{pmatrix},\nonumber
\end{align}
where the functions $E,R$ and $N$ are given in \eqref{def:E},  \eqref{def:R} and \eqref{def:N}, respectively.
By \eqref{g01}, it is readily seen that
\begin{equation}\label{eq:g+deri}
g_+'(x) = -\frac{\sqrt{1-x}}{2x}<0, \qquad x \in (0, 1-\varepsilon].
\end{equation}
Thus, for $u \in (0, 1- \varepsilon]$,
\begin{equation}
e^{-\nu (g_+(u)-\pi \ii/2)} \le e^{-\nu (g_+(1-\varepsilon)-\pi \ii/2)} =  e^{-\nu\left(\frac 12 \ln \frac{1+\sqrt{\varepsilon}}{1-\sqrt{\varepsilon}}-\sqrt{\varepsilon}\right)}<e^{-\frac 32 h_{\nu}^{-1}}, \qquad \nu \to +\infty,
\end{equation}
and other terms in \eqref{xto1} are bounded for large positive $\nu$, which follow from \eqref{def:N}, \eqref{def:E} and \eqref{eq:Rexp}. By taking
\begin{equation}\label{uv}
u=(1-xh_{\nu})^2, \qquad v=(1-yh_{\nu})^2
\end{equation}
in \eqref{xto1}, where
\begin{equation}\label{eq:case1}
(1-\sqrt{1-\varepsilon}) h_{\nu}^{-1}\le x < h_{\nu}^{-1}, \qquad t_0 \le y < (1-\sqrt{1-\varepsilon}) h_{\nu}^{-1}
\end{equation}
with $t_0$ being any fixed real number, we have, for any $\mathfrak{m}\in\mathbb{N}$,
\begin{align}
&\sqrt{\phi_{\nu}'(x) \phi_{\nu}'(y)}K_{\nu}^{\Bes}(\phi_{\nu}(x), \phi_{\nu}(y))
<e^{-\frac 32 h_{\nu}^{-1}} \cdot \Boh (e^{-y})\\
&<e^{-\frac 12 h_{\nu}^{-1}} \cdot \Boh \left(e^{-(x+y)}\right) = h_{\nu}^{\mathfrak{m}+1}\cdot \Boh \left(e^{-(x+y)}\right),\qquad h_{\nu} \to 0^+.\nonumber
\end{align}
Here, $\phi_{\nu}$ is defined in \eqref{def:phi} and the error term $\Boh (e^{-y})$ in the first inequality comes from the estimate \cite[Formula 9.7.15]{DLMF}
\begin{equation}\label{estimate-Ai}
|p(\zeta)| \cdot \max \left(|\Ai (\zeta)|, |\Ai' (\zeta)|\right) \le c_p e^{-\zeta}, \qquad \zeta \in \mathbb{R},
\end{equation}
where $p$ is an arbitrary polynomial and the constant $c_p$ only depends on $p$. As a consequence, the transformed Bessel kernel \eqref{def:inducedkernel} will be absorbed into the error term of \eqref{h-s} in this case. As for the expansion terms in \eqref{h-s}, note that $x$ is large as $\nu \to +\infty$, we obtain again from \cite[Formula 9.7.15]{DLMF} that, for any arbitrary polynomials $q$,
\begin{equation}\label{Aix}
q(x) \cdot \Ai (x) \le q(x) \cdot \frac{e^{-\frac 23 x^{3/2}}}{2 \sqrt{\pi} x^{1/4}} < e^{-\frac 32 h_{\nu}^{-1}} < h_{\nu}^{\mathfrak{m}+1}\cdot \Boh \left(e^{-x}\right),
\end{equation}
and
\begin{equation}\label{Ai'x}
q(x) \cdot |\Ai' (x)| \le q(x) \cdot \frac{x^{1/4}e^{-\frac 23 x^{3/2}}}{2 \sqrt{\pi}} \left(1+\frac{7}{48 x^{3/2}}\right)< e^{-\frac 32 h_{\nu}^{-1}} < h_{\nu}^{\mathfrak{m}+1}\cdot \Boh \left(e^{-x}\right).
\end{equation}
This, together with the estimate \eqref{estimate-Ai} for $\Ai (y)$ and $\Ai'(y)$, implies that the expansion terms in \eqref{h-s} is also absorbed into the error term, which shows that \eqref{h-s} is valid under the condition \eqref{eq:case1}.

A similar argument holds if $u \in (1- \varepsilon, 1+ \varepsilon)$ and $v \in (0, 1- \varepsilon]$, which implies \eqref{h-s} for $t_0 \le x < (1-\sqrt{1-\varepsilon}) h_{\nu}^{-1}$ and $(1-\sqrt{1-\varepsilon}) h_{\nu}^{-1}\le y < h_{\nu}^{-1}$.

If $0<u,v\le 1- \varepsilon$, we obtain from \eqref{def:R} that
\begin{align}\label{yto1}
&\nu^2 K_{\nu}^{\Bes}(\nu^2 u, \nu^2 v)\\
&= \frac{1}{2 \pi \ii (u-v)} \begin{pmatrix}
0 &1
\end{pmatrix}e^{\nu (g_+(v)-\pi \ii/2) \sigma_3}N(v)^{-1}R_+(v)^{-1}R_+(u)N(u)e^{-\nu (g_+(u)-\pi \ii/2)  \sigma_3}\begin{pmatrix}
1\\0
\end{pmatrix}\nonumber\\
&=\frac{e^{-\nu (g_+(u)-\pi \ii/2)}e^{-\nu (g_+(v)-\pi \ii/2)}}{2 \pi \ii (u-v)}  \begin{pmatrix}
0 &1
\end{pmatrix}N(v)^{-1}R_+(v)^{-1}R_+(u)N(u)\begin{pmatrix}
1\\0
\end{pmatrix}.\nonumber
\end{align}
From \eqref{eq:g+deri}, it follows that
\begin{multline}
e^{-\nu (g_+(u)-\pi \ii/2)}e^{-\nu (g_+(v)-\pi \ii/2)} \le e^{-2\nu (g_+(1-\varepsilon)-\pi \ii/2)}
\\
=  e^{-\nu\left(\ln \frac{1+\sqrt{\varepsilon}}{1-\sqrt{\varepsilon}}-2\sqrt{\varepsilon}\right)}<e^{-3 h_{\nu}^{-1}},  \qquad \nu \to +\infty,
\end{multline}
and other terms in \eqref{yto1} are bounded for large positive $\nu$, as can be seen from their definitions in \eqref{def:N} and \eqref{eq:Rexp}. Substituting \eqref{uv} into \eqref{yto1} with
\begin{equation}\label{eq:case3}
(1-\sqrt{1-\varepsilon}) h_{\nu}^{-1}\le x, y < h_{\nu}^{-1},
\end{equation}
we have, for any $\mathfrak{m}\in\mathbb{N}$,
\begin{align}
&\sqrt{\phi_{\nu}'(x) \phi_{\nu}'(y)}K_{\nu}^{\Bes}(\phi_{\nu}(x), \phi_{\nu}(y)) <e^{-3 h_{\nu}^{-1}} \cdot \Boh (1)\\
&<e^{-  h_{\nu}^{-1}} \cdot \Boh \left(e^{-(x+y)}\right) = h_{\nu}^{\mathfrak{m}+1}\cdot \Boh \left(e^{-(x+y)}\right), \qquad h_{\nu} \to 0^+.\nonumber
\end{align}
The transformed Bessel kernel is again absorbed into the error term completely in this case. The removability of the singularities at $x=y$ comes from the symmetric structure of the expansion \eqref{yto1}. Since the Airy functions in the expansion \eqref{h-s} related to $x$ and $y$ are superexponential decay as $\nu \to +\infty$ (see \eqref{Aix} and \eqref{Ai'x}), we conclude \eqref{h-s} under the condition \eqref{eq:case3}.

It remains to consider the final case, namely, $1-\varepsilon <u,v< 1+ \varepsilon$, which corresponds to
\begin{equation}\label{eq:case4}
t_0 \le x, y< (1-\sqrt{1-\varepsilon}) h_{\nu}^{-1}
\end{equation}
through \eqref{uv}. To proceed, we again observe from
\eqref{def:R} and \eqref{def:P1} that
\begin{align}\label{xyto1}
&\nu^2 K_{\nu}^{\Bes}(\nu^2 u, \nu^2 v)\\
&= \frac{1}{2 \pi \ii (u-v)} \begin{pmatrix}
0 &1
\end{pmatrix}e^{\nu (g_+(v)-\pi \ii/2) \sigma_3}P_+(v)^{-1}R(v)^{-1}R(u)P_+(u)e^{-\nu (g_+(u)-\pi \ii/2)  \sigma_3}\begin{pmatrix}
1\\0
\end{pmatrix}\nonumber\\
&=-\frac{1}{\ii (u-v)} \begin{pmatrix}
\ii \Ai'(\nu^{\frac 23}f(v)) &\Ai (\nu^{\frac 23}f(v))
\end{pmatrix}E(v)^{-1}R(v)^{-1}R(u)E(u)\begin{pmatrix}
\Ai(\nu^{\frac 23}f(u)) \\-\ii \Ai'(\nu^{\frac 23}f(u))
\end{pmatrix}.\nonumber
\end{align}
Inserting \eqref{uv} into the above formula, we see from \eqref{def:inducedkernel} that
\begin{align}\label{89}
&\hat K_{\nu}^{\mathrm{Bes}}(x,y)=\sqrt{\phi_{\nu}'(x) \phi_{\nu}'(y)}K_{\nu}^{\Bes}(\phi_{\nu}(x), \phi_{\nu}(y))\nonumber\\
&= \frac{\sqrt{(1-h_{\nu}x)(1-h_{\nu}y)}}{x-y-\frac{h_{\nu}}{2}(x^2-y^2)}\begin{pmatrix}
 \Ai'(\nu^{\frac 23}f((1-h_{\nu}y)^2)) &-\ii \Ai (\nu^{\frac 23}f((1-h_{\nu}y)^2))
\end{pmatrix}\nonumber\\
&\quad \times E((1-h_{\nu}y)^2)^{-1}R((1-h_{\nu}y)^2)^{-1}R((1-h_{\nu}x)^2)E((1-h_{\nu}x)^2) \nonumber \\
&\quad \times \begin{pmatrix}
\Ai(\nu^{\frac 23}f((1-h_{\nu}x)^2)) \\-\ii \Ai'(\nu^{\frac 23}f((1-h_{\nu}x)^2))
\end{pmatrix}.
\end{align}

We now show expansions of different parts on the right-hand side of the above formula. According to \cite[Lemma 3.2]{B23}, one has
\begin{equation}\label{final:1}
\frac{\sqrt{(1-h_{\nu}x)(1-h_{\nu}y)}}{x-y-\frac{h_{\nu}}{2}(x^2-y^2)} = \frac{1}{x-y} - (x-y)\sum_{j=2}^{\infty}r_j(x,y)h_{\nu}^j,
\end{equation}
where each $r_j(x,y)$ is certain polynomial of degree $j-2$.

Next, it is observed that if $x=y$,
\begin{equation}
E((1-h_{\nu}y)^2)^{-1}R((1-h_{\nu}y)^2)^{-1}R((1-h_{\nu}x)^2)E((1-h_{\nu}x)^2)= I.
\end{equation}
This, together with \eqref{R1} and the fact that $E(z)$ is an analytic function in $D(1, \varepsilon)$, implies that,
as $\nu \to +\infty$,
\begin{multline}\label{final:2}
E((1-h_{\nu}y)^2)^{-1}R((1-h_{\nu}y)^2)^{-1}R((1-h_{\nu}x)^2)E((1-h_{\nu}x)^2)
\\
\sim I + (x-y)\sum_{j=1}^{\infty}e_j(x,y) h_{\nu}^j,
\end{multline}
where $e_j(x,y)$ are certain matrices with all the entries being polynomials in $x$ and $y$. Indeed, by \eqref{E1}, \eqref{def:R1} and \eqref{R1}, it follows that, as $h_\nu \to 0^+$,
\begin{align}\label{93}
E((1-h_{\nu}x)^2)&=(2h_{\nu})^{-\frac 14 \sigma_3}e^{-\frac{\pi \ii}{4} \sigma_3} \sigma_3 \left(I + \frac{x}{5}\sigma_3h_{\nu} + \begin{pmatrix}
\frac{3}{35}x^2 & 0\\
0 & -\frac{8}{175}x^2
\end{pmatrix}h_{\nu}^2 + \Boh(h_{\nu}^3) \right),\\
R((1-h_{\nu}x)^2)&=h_{\nu}^{-\frac 14 \sigma_3} \left(I + \begin{pmatrix}
0 & 0\\
\frac{7\sqrt{2}}{40} &0
\end{pmatrix}h_{\nu}+\begin{pmatrix}
0&-\frac{\sqrt{2}}{70}\\
\frac{\sqrt{2}}{25}x&0
\end{pmatrix} h_{\nu}^2 + \Boh(h_{\nu}^3)\right)h_{\nu}^{\frac 14 \sigma_3}.\label{94}
\end{align}
Thus, we have
\begin{align}
e_1(x,y) = \frac{1}{5} \sigma_3, \qquad
e_2(x,y)  = \begin{pmatrix}
\frac{15x+8y}{175} & 0\\
\frac{\ii}{25}&-\frac{8x+15y}{175}
\end{pmatrix}.
\end{align}

To deal with the parts involving the Airy functions, we state the following proposition. 
\begin{proposition}\label{pro:Ai}
Let $\xi$ and $\eta$ be two variables with expansions
\begin{equation}\label{def:xi}
\xi = x + \sum_{j=1}^{\infty} p_{1,j}(x)h^j, \qquad \eta = y + \sum_{j=1}^{\infty} p_{1,j}(y)h^j, \qquad h\to 0,
\end{equation}
where $p_j$ are polynomials of degree $j+1$. As $h \to 0$, we have 
\begin{align}\label{expand-Ai}
& \frac{1}{x-y}\begin{pmatrix}
\Ai'(\eta) &
-\ii \Ai(\eta)
\end{pmatrix} \begin{pmatrix}
\Ai(\xi) \\
-\ii \Ai'(\xi)
\end{pmatrix}
 \\
& = K^{\Ai}(x,y) + \sum_{j=1}^{\infty} \left(a_j(x,y)\Ai(x)\Ai(y)+b_j(x,y)\Ai(x)\Ai'(y)+b_j(y,x)\Ai'(x)\Ai(y)\right.\nonumber\\
& \quad \left.+c_j(x,y)\Ai'(x)\Ai'(y)\right)h^j, \nonumber 
\end{align}
where $K^{\Ai}(x,y)$ denotes the Airy kernel given in \eqref{def:KAi}, $a_j(x,y)$, $b_j(x,y)$ and $c_j(x,y)$ represent certain polynomials in $x$ and $y$.
\end{proposition}
\begin{proof}
By noting
\begin{align}\label{fm}
\frac{1}{m!}\left(\xi-x\right)^m =\frac{1}{m!}\left(\sum_{j=1}^{\infty} p_{1,j}(x)h^j\right)^m = \sum_{j=m}^{\infty} p_{m,j}(x) h^j, \qquad h \to 0,
\end{align}
it is easily seen that
\begin{align}\label{sum-mn}
p_{m,j}(x) =  \frac{(m-n)! n!}{m!} \sum_{k=m-n}^{j-n}  p_{m-n,k}(x)p_{n,j-k}(x)
\end{align}
for $1<n<m\le j$.
We then obtain from the analyticity of Airy function and \eqref{fm} that
\begin{align}\label{exp:Ai}
\Ai (\xi)= \Ai (x) + \sum_{j=1}^{\infty} \sum_{m=1}^j p_{m,j}(x) \Ai^{(m)}(x)h^j, \qquad h \to 0,
\end{align}
where $\Ai^{(m)}(x)$ denotes the $m$-th derivative of $\Ai(x)$ with respect to $x$. From the differential equation
\begin{equation}
\frac{\ud^2 \Ai(z)}{\ud z^2} = z \Ai (z)
\end{equation}
satisfied by Airy function, a direct calculation gives us
\begin{align}\label{Aim}
\Ai^{(m)}(x) = P_{m}(x) \Ai(x) + Q_m(x) \Ai'(x),
\end{align}
where $P_m (x)$ and $Q_m (x)$ are polynomials with $P_0(x)=1$ and $Q_0(x)=0$. They satisfy the recurrence relations (cf. \cite{L11})
\begin{align}\label{3termP}
P_{n+1} (x) = P_n'(x)+x Q_n(x),\qquad
Q_{n+1} (x) = Q_n'(x)+P_n(x).
\end{align}
With the aids of the expansions \eqref{exp:Ai} and \eqref{Aim}, we obtain
\begin{align}\label{airypart}
&\quad\begin{pmatrix}
 \Ai'(\eta) &-\ii \Ai (\eta)
\end{pmatrix}\begin{pmatrix}
\Ai(\xi) \\-\ii \Ai'(\xi)
\end{pmatrix}\\
&=\left(\Ai (x) + \sum_{j=1}^{\infty} \sum_{m=1}^j p_{m,j}(x) \Ai^{(m)}(x)h^j\right) \cdot \left(\Ai' (y) + \sum_{j=1}^{\infty} \sum_{m=1}^j p_{m,j}(y) \Ai^{(m+1)}(y)h^j\right)\nonumber\\
&\quad - \left(\Ai' (x) + \sum_{j=1}^{\infty} \sum_{m=1}^j p_{m,j}(x) \Ai^{(m+1)}(x)h^j\right) \cdot \left(\Ai (y) + \sum_{j=1}^{\infty} \sum_{m=1}^j p_{m,j}(y) \Ai^{(m)}(y)h^j\right)\nonumber\\
&=\Ai (x) \Ai'(y)-\Ai'(x)\Ai(y) + \sum_{N=1}^{\infty}\left(a_{N,00}(x,y) \Ai(x) \Ai(y)+a_{N,01}(x,y) \Ai(x) \Ai'(y)\right.\nonumber\\
&\quad \left.-a_{N,01}(y,x) \Ai'(x) \Ai(y)+a_{N,11}(x,y) \Ai'(x) \Ai'(y)\right)h^N,\nonumber
\end{align}
where
\begin{align*}
&a_{N,00}(x,y) = \sum_{n=1}^{N} \left(p_{n,N}(y) P_{n+1}(y) -p_{n,N}(x) P_{n+1}(x)\right)\\
&\quad+ \sum_{\substack{j+k=N\\j\ge1,k\ge1}} \sum_{m=1}^{j} \sum_{n=1}^k \left(p_{m,j}(x) p_{n,k}(y) P_m(x) P_{n+1}(y)-p_{m,j}(y) p_{n,k}(x) P_m(y) P_{n+1}(x)\right),\nonumber\\
&a_{N,01}(x,y) = \sum_{n=2}^{N} \left(p_{n,N}(y) Q_{n+1}(y) +p_{n,N}(x) P_{n}(x)\right)\\
&\quad+ \sum_{\substack{j+k=N\\j\ge1,k\ge1}} \sum_{m=1}^{j} \sum_{n=1}^k p_{m,j}(x) p_{n,k}(y)\left( P_m(x) Q_{n+1}(y)- P_{m+1}(x) Q_{n}(y)\right),\nonumber
\end{align*}
and
\begin{align}
&a_{N,11}(x,y) = \sum_{n=1}^{N} \left(p_{n,N}(x) Q_{n}(x) -p_{n,N}(y) Q_{n}(y)\right)\nonumber\\
&\quad+ \sum_{\substack{j+k=N\\j\ge1,k\ge1}} \sum_{m=1}^{j} \sum_{n=1}^k \left(p_{m,j}(x) p_{n,k}(y) Q_m(x) Q_{n+1}(y)-p_{m,j}(y) p_{n,k}(x) Q_m(y) Q_{n+1}(x)\right),\nonumber
\end{align}
are polynomials in $x$ and $y$. Since the polynomials $a_{N,00}(x,y)$ and $a_{N,11}(x,y)$ are anti-symmetric in $x$ and $y$, they must have the form
\begin{equation}\label{xy}
(x-y) \times (\textrm{polynomials in $x$ and $y$}).
\end{equation}
We would like to show the polynomials $a_{N,01}(x,y)$ admit the same structure. In other words, we want to show
\begin{equation}
a_{N,01}(x,x) = 0,
\end{equation}
or equivalently,
\begin{align}\label{p01}
&\sum_{n=2}^N p_{n,N}(x) \left(Q_{n+1} (x)+ P_n(x)\right) \nonumber\\
&+ \sum_{\substack{j+k=N\\j\ge 1, k\ge 1}}^{N} \sum_{m=1}^j \sum_{n=1}^k \left(p_{m,j}(x) p_{n,k}(x) (P_m (x) Q_{n+1} (x)-P_{m+1}(x) Q_n (x)\right)=0.
\end{align}
To prove the above equality, we need the following lemma.
\begin{lemma}
With polynomials $P_m$ and $Q_m$ defined through \eqref{Aim}, we have
\begin{align}\label{eq:1}
\sum_{j=0}^N \frac{1}{j! (N-j)!} \left(P_j(x) Q_{N+1-j}(x) - Q_j(x) P_{N+1-j}(x)\right)=0, \qquad N\geq 1.
\end{align}
\end{lemma}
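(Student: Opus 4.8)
\medskip
\noindent\textbf{Proof strategy.}\quad
The plan is to recognize the left-hand side of \eqref{eq:1} as the coefficient of $t^{N}$ in a Wronskian-type determinant that turns out to be identically $1$. First I would package the polynomials $P_m$ and $Q_m$ into generating functions. Taylor-expanding the Airy function about $x$ and using \eqref{Aim},
\begin{equation}\label{eq:genfun}
\Ai(x+t)=\sum_{m=0}^{\infty}\frac{\Ai^{(m)}(x)}{m!}\,t^{m}=\mathcal{P}(x,t)\,\Ai(x)+\mathcal{Q}(x,t)\,\Ai'(x),
\end{equation}
where $\mathcal{P}(x,t):=\sum_{m\ge 0}\frac{P_m(x)}{m!}\,t^{m}$ and $\mathcal{Q}(x,t):=\sum_{m\ge 0}\frac{Q_m(x)}{m!}\,t^{m}$. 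Since $P_m$ and $Q_m$ are generated by the recursion \eqref{3termP}, which merely encodes the Airy equation $\Ai''=x\,\Ai$, the identity \eqref{eq:genfun} remains valid with $\Ai$ replaced by the companion solution $\mathrm{Bi}$. Differentiating \eqref{eq:genfun} and its $\mathrm{Bi}$-analogue once in $t$ and assembling both relations in matrix form gives $\Phi(x+t)=M(x,t)\,\Phi(x)$, where
\begin{equation}
\Phi(s):=\begin{pmatrix}\Ai(s)&\mathrm{Bi}(s)\\ \Ai'(s)&\mathrm{Bi}'(s)\end{pmatrix},
\qquad
M(x,t):=\begin{pmatrix}\mathcal{P}(x,t)&\mathcal{Q}(x,t)\\ \partial_t\mathcal{P}(x,t)&\partial_t\mathcal{Q}(x,t)\end{pmatrix}.
\end{equation}

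The next step is to evaluate $\det M$. Because $\Phi$ is a fundamental matrix of the trace-free first-order system equivalent to $\Ai''=x\,\Ai$, its determinant is the Wronskian $\Ai(s)\mathrm{Bi}'(s)-\Ai'(s)\mathrm{Bi}(s)=\pi^{-1}$, which is independent of $s$ by Abel's identity; hence $\det M(x,t)=\det\Phi(x+t)/\det\Phi(x)=1$ for all $x$ and $t$. On the other hand, using $\partial_t\mathcal{Q}(x,t)=\sum_{m\ge 0}\frac{Q_{m+1}(x)}{m!}\,t^{m}$ and likewise for $\partial_t\mathcal{P}$, and expanding the Cauchy products, one finds
\begin{equation}
\det M(x,t)=\mathcal{P}\,\partial_t\mathcal{Q}-\mathcal{Q}\,\partial_t\mathcal{P}
=\sum_{N=0}^{\infty}\left(\sum_{j=0}^{N}\frac{P_j(x)Q_{N+1-j}(x)-Q_j(x)P_{N+1-j}(x)}{j!\,(N-j)!}\right)t^{N}.
\end{equation}
Matching the coefficient of $t^{N}$ on the two sides gives exactly \eqref{eq:1} for every $N\ge1$, along with the normalization $P_0Q_1-Q_0P_1=1$ at $N=0$.

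The only genuinely non-routine point is the identification in the first step: recognizing the anti-symmetric combination of the $P$'s and $Q$'s as the determinant of the transfer matrix $M$ of the Airy system, whose value is then pinned to $1$ by Abel's identity. This is precisely the ``hidden symmetry of the coefficients'' anticipated in \cite{B23}; the remaining manipulations are formal power-series bookkeeping. The same fact $\det M\equiv1$, now with $t$ specialised to the power series $\nu^{2/3}f((1-h_{\nu}x)^2)-x$, is what I would then use to obtain the vanishing \eqref{p01}, and hence the anti-symmetric structure \eqref{xy} of the polynomial $a_{N,01}(x,y)$.
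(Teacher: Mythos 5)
Your proof is correct, but it takes a genuinely different route from the paper. The paper proves \eqref{eq:1} by induction on $N$: it checks $N=1$, differentiates the inductive hypothesis in $x$, applies the recurrences \eqref{3termP}, and reindexes the resulting sums to recover the identity at level $N+1$ (up to the nonzero factor $N+1$). You instead observe that the left-hand side of \eqref{eq:1} is the coefficient of $t^N$ in $\mathcal{P}\,\partial_t\mathcal{Q}-\mathcal{Q}\,\partial_t\mathcal{P}=\det M(x,t)$, where $M$ is the transfer matrix sending the fundamental solution matrix $\Phi(x)$ of the Airy equation to $\Phi(x+t)$, and then pin $\det M\equiv 1$ via the constancy of the Wronskian $\Ai\,\mathrm{Bi}'-\Ai'\,\mathrm{Bi}=\pi^{-1}$. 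The key point you rely on --- that the same polynomials $P_m,Q_m$ express the derivatives of \emph{any} solution of $y''=xy$, in particular $\mathrm{Bi}$, in terms of $y,y'$ --- is correct, since \eqref{3termP} uses only the differential equation. Your argument is cleaner and more conceptual: it identifies the ``hidden symmetry'' as a Wronskian/unimodularity statement, yields the normalization $P_0Q_1-Q_0P_1=1$ at $N=0$ for free, and (as you note) specializes directly, with $t$ replaced by the series $\nu^{2/3}f((1-h_\nu x)^2)-x$, to give \eqref{p01}; the paper's induction is more elementary but offers less insight into why the cancellation occurs. One small point worth making explicit: either treat everything as formal power series in $t$ (the determinant identity passes through unchanged, and \eqref{eq:1} is a termwise polynomial identity, so no convergence is needed), or note that $\sum_m P_m(x)t^m/m!$ and $\sum_m Q_m(x)t^m/m!$ converge for all $t$ because $P_m=\pi(\Ai^{(m)}\mathrm{Bi}'-\mathrm{Bi}^{(m)}\Ai')$ and $Q_m=\pi(\mathrm{Bi}^{(m)}\Ai-\Ai^{(m)}\mathrm{Bi})$ inherit entire-function growth bounds; as written, the interchange of the two expansions in \eqref{eq:genfun} is asserted rather than justified.
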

\begin{proof}
We use the method of mathematical induction to prove the above identity. It is clear that \eqref{eq:1} holds for $N=1$.
Assume that \eqref{eq:1} is valid for $N=k >1$, i.e.,
\begin{align}
\sum_{j=0}^k \frac{1}{j! (k-j)!} \left(P_j(x) Q_{k+1-j}(x) - Q_j(x) P_{k+1-j}(x)\right)=0.
\end{align}
After taking derivative on both sides with respected to $x$, it follows that
\begin{multline*}
\sum_{j=0}^k \frac{1}{j! (k-j)!} \left(P_j'(x) Q_{k+1-j}(x)+ P_j(x) Q_{k+1-j}'(x)- Q_j'(x) P_{k+1-j}(x)\right.
\\
\left.-Q_j(x) P_{k+1-j}'(x)\right)=0,
\end{multline*}
Applying the recurrence relations \eqref{3termP} to the above formula, we arrive at
\begin{align}
&\sum_{j=0}^k \frac{1}{j! (k-j)!} \left(P_{j+1}(x) Q_{k+1-j}(x)- Q_{j+1}(x) P_{k+1-j}(x)+P_j(x) Q_{k+2-j}(x)\right.\nonumber\\
&\qquad\left.-Q_j(x) P_{k+2-j}(x)\right)\nonumber\\
&=\sum_{j=0}^{k+1} \frac{j}{j! (k+1-j)!} \left(P_{j}(x) Q_{k+2-j}(x)- Q_{j}(x) P_{k+2-j}(x)\right)\nonumber\\
&\qquad+\sum_{j=0}^{k+1}\frac{k+1-j}{j! (k+1-j)!}\left(P_j(x) Q_{k+2-j}(x)-Q_j(x) P_{k+2-j}(x)\right)\nonumber\\
&=\sum_{j=0}^{k+1} \frac{k+1}{j! (k+1-j)!} \left(P_j(x) Q_{k+2-j}(x) - Q_j(x) P_{k+2-j}(x)\right)\nonumber=0,
\end{align}
which is \eqref{eq:1} with $N=k+1$.
\end{proof}
Using \eqref{sum-mn} and \eqref{eq:1}, we could rewrite the left-hand side of \eqref{p01} as
\begin{align*}
&\sum_{n=2}^N p_{n,N}(x) \left(Q_{n+1} (x)+ P_n(x)\right) \\
&\qquad+ \sum_{\substack{m+n=2\\m\ge 1, n\ge 1}}^{N} \sum_{\substack{j+k=N\\ j \ge m, k \ge n}}\left(p_{m,j}(x) p_{n,k}(x) (P_m (x) Q_{n+1} (x)-P_{m+1}(x) Q_n (x))\right)\nonumber\\
&=\sum_{t=2}^N \Biggl(p_{t,N}(x) \left(Q_{t+1} (x)+ P_t(x)\right)\\
&\qquad + \sum_{\substack{m+n=t\\m\ge 1, n\ge 1}} \sum_{\substack{j+k=N,\\j \ge m, k \ge n}}\left(p_{m,j}(x) p_{n,k}(x) (P_m (x) Q_{n+1} (x)-P_{m+1}(x) Q_n (x))\right)\Biggr)\nonumber\\
&=\sum_{t=2}^N \Biggl(p_{t,N}(x) \left(Q_{t+1} (x)+ P_t(x)\right)\\
&\qquad + \sum_{\substack{m+n=t\\m\ge 1, n\ge 1}} \left(\frac{t!}{m! n!} p_{t,N}(x) (P_m (x) Q_{n+1} (x)-P_{m+1}(x) Q_n (x))\right)\Biggr)\\
&=\sum_{t=2}^N p_{t,N}(x) \Biggl(Q_{t+1} (x)+ P_t(x) \\
&\qquad+ t! \sum_{m= 1}^{t-1} \left(\frac{1}{m! (t-m)!} (P_m (x) Q_{t+1-m} (x)-P_{t+1-m}(x) Q_m (x))\right)\Biggr)\nonumber\\
&=\sum_{t=2}^N t! p_{t,N}(x) \left(\sum_{m= 0}^{t} \left(\frac{1}{m! (t-m)!} (P_m (x) Q_{t+1-m} (x)-P_{t+1-m}(x) Q_m (x))\right)\right)=0,
\end{align*}
as required. 

Since the polynomials $a_{N,00}, a_{N,01}$ and $a_{N,11}$ all take the same structure \eqref{xy}, we obtain \eqref{expand-Ai} in Proposition \ref{pro:Ai} from  \eqref{airypart}.
\end{proof}
From the expansion of $f$ near $z=1$ given in \eqref{def:f}, we have
\begin{align}\label{110}
\nu^{\frac 23}f((1-h_{\nu}x)^2) = x + \sum_{j=1}^{\infty} \tilde p_{1,j}(x) h_{\nu}^j, \qquad \textrm{$\nu \to +\infty$},
\end{align}
where $\tilde p_{1,j}(x)$ are certain polynomials of degree $j+1$ with
\begin{align}
\tilde p_{1,1}(x) = \frac{3}{10} x^2,\qquad
\tilde p_{1,2}(x)  = \frac{32}{175} x^3.
\end{align}
Then by setting
\begin{equation}\label{def:xieta}
\xi = \nu^{\frac 23}f((1-h_{\nu}x)^2), \qquad \eta =\nu^{\frac 23}f((1-h_{\nu}y)^2),
\end{equation}
applying Proposition \ref{pro:Ai}, and employing the estimates of Airy functions in \eqref{estimate-Ai}, it follows that, for any $\mathfrak{m}\in\mathbb{N}$,
\begin{align}\label{final:3}
&\frac{1}{x-y}\begin{pmatrix}
 \Ai'(\nu^{\frac 23}f((1-h_{\nu}y)^2)) &-\ii \Ai (\nu^{\frac 23}f((1-h_{\nu}y)^2))
\end{pmatrix}\begin{pmatrix}
\Ai(\nu^{\frac 23}f((1-h_{\nu}x)^2)) \\-\ii \Ai'(\nu^{\frac 23}f((1-h_{\nu}x)^2))
\end{pmatrix}\\
&=\frac{\Ai (x) \Ai'(y)-\Ai'(x)\Ai(y)}{x-y} + \sum_{N=1}^{\mathfrak{m}}\left(\tilde a_{N,00}(x,y) \Ai(x) \Ai(y)+\tilde a_{N,01}(x,y) \Ai(x) \Ai'(y)\right.\nonumber\\
&\quad \left.+\tilde a_{N,01}(y,x) \Ai'(x) \Ai(y)+\tilde a_{N,11}(x,y) \Ai'(x) \Ai'(y)\right)h_{\nu}^N+h_{\nu}^{\mathfrak{m}+1} \cdot \Boh\left(e^{-(x+y)}\right),\nonumber
\end{align}
where $\tilde a_{N,00}(x,y)$, $\tilde a_{N,01}(x,y)$ and $\tilde a_{N,11}(x,y)$ are certain polynomials in $x$ and $y$.

Combining \eqref{final:1}, \eqref{final:2} and \eqref{final:3} together gives us that, under the condition \eqref{eq:case4},
\begin{align}
&\hat K_{\nu}^{\mathrm{Bes}}(x,y)=\sqrt{\phi_{\nu}'(x) \phi_{\nu}'(y)}K_{\nu}^{\Bes}(\phi_{\nu}(x), \phi_{\nu}(y)) \nonumber\\
&=\frac{\Ai (x) \Ai'(y)-\Ai'(x)\Ai(y)}{x-y} + \sum_{j=1}^{\mathfrak{m}}\left(p_{j,00}(x,y) \Ai(x) \Ai(y)+p_{j,01}(x,y) \Ai(x) \Ai'(y)\right.\nonumber\\
&\quad \left.+p_{j,10}(x,y) \Ai'(x) \Ai(y)+p_{j,11}(x,y) \Ai'(x) \Ai'(y)\right)h_{\nu}^j+h_{\nu}^{\mathfrak{m}+1} \cdot \Boh\left(e^{-(x+y)}\right),\nonumber
\end{align}
as $h_{\nu} \to 0^+$, where $p_{j, \kappa \lambda} (x,y)$, $\kappa, \lambda \in \{0,1\}$, are polynomials in $x$ and $y$.
Additionally, it is worth noting that even though the calculations involved are inherently complicated, it is possible to derive precise formulae for $p_{j, \kappa \lambda}$ utilizing the explicit expressions \eqref{def:f}, \eqref{E1}, \eqref{R1} and \eqref{final:1}; see Remark \ref{rk:coeffcal} below for more information. The first few polynomials are hereby presented:
\begin{align*}
p_{1,00}(x,y)&=-\frac{3}{10}(x^2+xy+y^2),
\\
p_{1,01}(x,y)&=p_{1,10}(x,y)=\frac 15,
\\
p_{1,11}(x,y)&=\frac{3}{10}(x+y),
\end{align*}
and
\begin{align*}
p_{2,00}(x,y)&=\frac{56-235(x^2+y^2)-319xy(x+y)}{1400},
\\
p_{2,01}(x,y)&=p_{2,10}(y,x)=\frac{63(x^4+x^3y-x^2y^2-xy^3-y^4)-55x+239y}{1400},
\\
p_{2,11}(x,y)&=\frac{340(x^2+y^2)+256xy}{1400},
\end{align*}
which lead to \eqref{def:K1} and \eqref{def:K2}.

Finally, since all the terms $\Ai (\cdot)$, $\Ai' (\cdot)$, $E (\cdot)$ and $R (\cdot)$ are analytic functions for $t_0 \le x, y < (1-\sqrt{1-\varepsilon}) h_{\nu}^{-1}$, all the expansions can be repeatedly differentiated with respect to the variables $x$ and $y$ while preserving the uniformity, which holds for the kernel expansion.

This completes the proof of Theorem \ref{th:1}.
\qed

\begin{remark}
It is worthwhile to see that \cite[Lemma 3.5]{B23} follows directly from the above proof. Indeed, with $\xi$ and $\eta$ defined in \eqref{def:xieta}, we have from \eqref{110} that 
    \begin{align}
        \xi &= \nu^{\frac 23}f((1-h_{\nu}x)^2) = x + \frac{3x^2}{10}h_{\nu}+\frac{32x^3}{175}h_{\nu}^2+\frac{1037x^4}{7875}h_{\nu}^3+\cdots, \qquad \textrm{$h_{\nu} \to 0$},
    \\
        \eta & = \nu^{\frac 23}f((1-h_{\nu}y)^2)= y + \frac{3y^2}{10}h_{\nu}+\frac{32y^3}{175}h_{\nu}^2+\frac{1037y^4}{7875}h_{\nu}^3+\cdots, \qquad \textrm{$h_{\nu} \to 0$}.
    \end{align}
Reversing the above series gives
    \begin{align}
        x &= \xi - \frac{3\xi^2}{10}h_{\nu}-\frac{\xi^3}{350}h_{\nu}^2+\frac{479\xi^4}{63000}h_{\nu}^3+\cdots,\label{x-xi}\\
        y &= \eta - \frac{3\eta^2}{10}h_{\nu}-\frac{\eta^3}{350}h_{\nu}^2+\frac{479\eta^4}{63000}h_{\nu}^3+\cdots.\label{y-eta}
    \end{align}
Thus, rewriting the Bessel kernel using the variables $\xi$ and $\eta$ allows us to obtain the factorization by substituting \eqref{x-xi} and \eqref{y-eta} into \eqref{89}, i.e., 
    \begin{align}
        \tilde K_{\nu}^{\Bes}(\xi, \eta) = \frac{1}{\xi-\eta}\tilde T_0(\xi,\eta)\begin{pmatrix}
            \Ai'(\eta) & -\ii \Ai(\eta)
        \end{pmatrix}\tilde T_1(\xi,\eta)\begin{pmatrix}
            \Ai(\xi) \\ -\ii \Ai'(\xi)
        \end{pmatrix},
    \end{align}
    where 
    \begin{align}
        \tilde T_0(\xi,\eta) \sim 1-(\xi-\eta)^2\sum_{j=2}^{\infty}\pi_j(\xi,\eta)h_{\nu}^j
    \end{align}
    with $\pi_j(\xi,\eta)$ being certain polynomials of degree $j-2$ in $\xi$ and $\eta$, and the first two of them are given by
    \begin{align}
        \pi_2(\xi,\eta)=\frac{6}{35}, \qquad \pi_3(\xi,\eta)=\frac{16}{225}(\xi+\eta).
    \end{align}
    Additionally, we have
    \begin{align}
        \tilde T_1(\xi,\eta) \sim I + (\xi-\eta)\sum_{j=1}^{\infty}\rho_j(\xi,\eta)h_{\nu}^j,
    \end{align}
    where $\rho_j(\xi,\eta)$ are certain matrices with all the entries being polynomials in $\xi$ and $\eta$, and the first two of them are given by
    \begin{align}
        \rho_1(\xi,\eta)=\frac{1}{5}\sigma_3, \qquad \rho_2(\xi,\eta)=\begin{pmatrix}
            \frac{9\xi-5\eta}{350} & 0\\
            \frac{\ii}{25} & \frac{5\xi-9\eta}{350}
        \end{pmatrix}.
    \end{align}
As a consequence, we immediately obtain a full expansion of $\tilde K_{\nu}^{\Bes}$ in terms of powers of $h_{\nu}$. By truncating the expansion to $\Boh(h_{\nu}^3)$, it reads  
 \begin{align}
        \tilde K_{\nu}^{\Bes}(\xi, \eta) = K^{\Ai}(\xi, \eta)+\tilde K_1(\xi, \eta)h_{\nu}+\tilde K_2(\xi, \eta)h_{\nu}^2+h_{\nu}^3\cdot\Boh\left(e^{-(\xi+ \eta)}\right),
        \end{align}
        where $K^{\Ai}$ denotes the Airy kernel given in \eqref{def:KAi},
        \begin{align}
            \tilde K_1(\xi, \eta) &= \frac{1}{5}\left(\Ai(\xi)\Ai'(\eta)+\Ai'(\xi)\Ai(\eta)\right),\\
            \tilde K_2(\xi, \eta) &= \frac{1}{350}\left(14\Ai(\xi)\Ai(\eta)+(-51\xi+55\eta)\Ai(\xi)\Ai'(\eta)+(55\xi-51\eta)\Ai'(\xi)\Ai(\eta)\right),
        \end{align}   
which is \cite[Lemma 3.5]{B23}. Here, we do not need to expand the Airy functions, which reduces the complexity of the original proof presented in \cite{B23}.                
   
\end{remark}

\begin{remark}\label{rk:coeffcal}
We emphasize that our approach also provides a systematic way to calculate the polynomial coefficients $p_{j, \kappa \lambda}$ of the expansion kernels $K_j(x,y)$ in \eqref{def:Kj}. 
In view of the factorization in \eqref{89}, we achieve this by computing appropriately truncated Laurent series of the functions $f(z)$, $E(z)$ and $R(z)$ at $z=1$. By utilizing \eqref{def:g} and \eqref{def:f} for the function $f(z)$, we can obtain the precise expressions for $\tilde p_{1,j}$ in \eqref{110}. Similarly, using \eqref{def:N} and \eqref{def:E} for the function $E(z)$,  along with \eqref{Jk}, \eqref{eq:Rexp}, \eqref{70} and \eqref{R1} for the function $R(z)$, we can compute precise expressions for $e_j(x,y)$ in \eqref{final:2}. All of these calculations are straightforward by using the truncated Laurent series except the Cauchy integral in \eqref{70}, which can be evaluated by Cauchy's residue theorem as
\begin{align}
R_k(z) = (\textrm{principal part of the Laurent series of $F(z)$ at $z=1$}) - F(z)
\end{align}
with $F(z) = \sum_{l=1}^k R_{l-k}(z) J_l(z)$ and $R_0(z)=I$.

Following these computations, one can obtain the polynomials factors of the kernels $K_j(x.y)$ explicitly, and the first ten factors agree with those reported in the Mathematica supplement of \cite{B23}.
\end{remark}

\begin{appendix}
\section{The Airy parametrix}\label{airy}
The Airy parametrix $\Phi^{({\Ai})}$ is the unique solution of the following RH problem.
\begin{rhp}
\hfill
\begin{itemize}
\item[\rm(a)] $\Phi^{(\mathrm{Ai})}(z)$ is analytic in $\mathbb{C} \setminus \{\cup_{j=1}^4 \Sigma_j \cup \{0\}\}$, where the contours $\Sigma_j$, $j=1, 2, 3, 4$, are indicated in Figure \ref{fig:Airy}.
\item[\rm(b)] $\Phi^{(\Ai)}(z)$ satisfies the jump condition
\begin{align}\label{jump:Airy}
\Phi^{({\Ai})}_+(z)=\Phi^{({\Ai})}_-(z)\begin{cases}
\begin{pmatrix}
1 & 1\\
0 & 1
\end{pmatrix}, & \qquad z \in \Sigma_1,\\
\begin{pmatrix}
1 & 0\\
1& 1
\end{pmatrix}, & \qquad z \in \Sigma_2 \cup \Sigma_4,\\
\begin{pmatrix}
0 & 1\\
-1 & 0
\end{pmatrix}, & \qquad z \in \Sigma_3.
\end{cases}
\end{align}
\item[\rm(c)] As $z \to \infty$, we have
\begin{align}\label{infty:Ai}
\Phi^{({\Ai})}(z) = \frac{1}{\sqrt{2}} \begin{pmatrix}
z^{-\frac 14} & 0\\
0 & z^{\frac 14}
\end{pmatrix} \begin{pmatrix}
1 & \ii\\
\ii & 1
\end{pmatrix}\left(I + \Boh(z^{-\frac 32})\right)e^{-\frac 23 z^{3/2} \sigma_3}.
\end{align}
\item[\rm(d)] $\Phi^{({\Ai})}(z)$ is bounded near the origin.
\end{itemize}
\end{rhp}

\begin{figure}[ht]
\begin{center}

\tikzset{every picture/.style={line width=0.75pt}} 

\begin{tikzpicture}[x=0.75pt,y=0.75pt,yscale=-1,xscale=1]

\draw    (52,120) -- (203,120) -- (333,120) ;
\draw [shift={(132.5,120)}, rotate = 180] [fill={rgb, 255:red, 0; green, 0; blue, 0 }  ][line width=0.08]  [draw opacity=0] (8.93,-4.29) -- (0,0) -- (8.93,4.29) -- cycle    ;
\draw [shift={(273,120)}, rotate = 180] [fill={rgb, 255:red, 0; green, 0; blue, 0 }  ][line width=0.08]  [draw opacity=0] (8.93,-4.29) -- (0,0) -- (8.93,4.29) -- cycle    ;
\draw    (116.5,20) -- (216.5,120) ;
\draw [shift={(170.04,73.54)}, rotate = 225] [fill={rgb, 255:red, 0; green, 0; blue, 0 }  ][line width=0.08]  [draw opacity=0] (8.93,-4.29) -- (0,0) -- (8.93,4.29) -- cycle    ;
\draw    (117.5,208) -- (216.5,120) ;
\draw [shift={(170.74,160.68)}, rotate = 138.37] [fill={rgb, 255:red, 0; green, 0; blue, 0 }  ][line width=0.08]  [draw opacity=0] (8.93,-4.29) -- (0,0) -- (8.93,4.29) -- cycle    ;
\draw  [fill={rgb, 255:red, 0; green, 0; blue, 0 }  ,fill opacity=1 ] (216.67,119.83) .. controls (216.67,119.19) and (216.15,118.66) .. (215.5,118.66) .. controls (214.85,118.66) and (214.33,119.19) .. (214.33,119.83) .. controls (214.33,120.48) and (214.85,121) .. (215.5,121) .. controls (216.15,121) and (216.67,120.48) .. (216.67,119.83) -- cycle ;

\draw (218.5,123) node [anchor=north west][inner sep=0.75pt]   [align=left] {$0$};
\draw (343,117) node [anchor=north west][inner sep=0.75pt]   [align=left] {$\Sigma_1$};
\draw (98,14) node [anchor=north west][inner sep=0.75pt]   [align=left] {$\Sigma_2$};
\draw (33,113) node [anchor=north west][inner sep=0.75pt]   [align=left] {$\Sigma_3$};
\draw (95,205) node [anchor=north west][inner sep=0.75pt]   [align=left] {$\Sigma_4$};

\end{tikzpicture}
  \caption{The jump contours of the RH problem for $\Phi^{({\Ai})}$.}
   \label{fig:Airy}
\end{center}
\end{figure}

Denote $\omega:= e^{ 2 \pi \ii/3}$, the unique solution is given by (cf. \cite{Deift1999})
\begin{align}
\Phi^{({\Ai})}(z) = \sqrt{2 \pi} \begin{cases}
\begin{pmatrix}
\Ai (z) & - \omega^2 \Ai (\omega^2 z)\\
-\ii \Ai'(z) & \ii \omega \Ai' (\omega^2 z)
\end{pmatrix}, & \arg z \in \left(0, \frac{3 \pi}{4} \right),\\
\begin{pmatrix}
-\omega \Ai (\omega z) & - \omega^2 \Ai (\omega^2 z)\\
\ii \omega^2 \Ai'(\omega z) & \ii \omega \Ai' (\omega^2 z)
\end{pmatrix}, & \arg z \in \left(\frac{3 \pi}{4}, \pi \right),\\
\begin{pmatrix}
-\omega^2 \Ai (\omega^2 z) & \omega \Ai (\omega z)\\
\ii \omega \Ai'(\omega^2 z) & -\ii \omega^2 \Ai' (\omega z)
\end{pmatrix}, & \arg z \in \left(-\pi, -\frac{3 \pi}{4} \right),\\
\begin{pmatrix}
\Ai (z) &  \omega \Ai (\omega z)\\
-\ii \Ai'(z) & -\ii \omega^2 \Ai' (\omega z)
\end{pmatrix}, & \arg z \in \left(-\frac{3 \pi}{4}, 0\right).
\end{cases}
\end{align}

Furthermore, applying the asymptotics of $\Ai$ and $\Ai'$ in \cite[Chapter 9]{DLMF}, we have, as $z \to \infty$,
\begin{align}\label{asy:Ai}
\Phi^{({\Ai})}(z) \sim \frac{1}{2\sqrt{2}} \begin{pmatrix}
z^{-\frac 14} & 0\\
0 & z^{\frac 14}
\end{pmatrix} \begin{pmatrix}
1 & \ii\\
\ii & 1
\end{pmatrix}\begin{pmatrix}
\sum\limits_{k=0}^{\infty} \left(-\frac{3}{2}\right)^k \frac{\mathfrak{u}_k + \mathfrak{v}_k}{z^{3k/2}} & \ii \sum\limits_{k=0}^{\infty} \left(\frac{3}{2}\right)^k \frac{\mathfrak{u}_k - \mathfrak{v}_k}{z^{3k/2}}\\
-\ii \sum\limits_{k=0}^{\infty} \left(-\frac{3}{2}\right)^k \frac{\mathfrak{u}_k - \mathfrak{v}_k}{z^{3k/2}} & \sum\limits_{k=0}^{\infty} \left(\frac{3}{2}\right)^k \frac{\mathfrak{u}_k + \mathfrak{v}_k}{z^{3k/2}}
\end{pmatrix}e^{-\frac 23 z^{3/2} \sigma_3},
\end{align}
where $\mathfrak{u}_0=\mathfrak{v}_0=1$ and
\begin{align}\label{def:ukvk}
\mathfrak{u}_k = \frac{(6k-5)(6k-3)(6k-1)}{216(2k-1)k}\mathfrak{u}_{k-1}, \qquad \mathfrak{v}_k=\frac{6k+1}{1-6k}\mathfrak{u}_k.
\end{align}

\section{Asymptotic expansion of the Bessel functions for large order and large argument}\label{ap:B}
As a further application of our RH analysis performed in Section \ref{sec:rhp}, we establish transient asymptotics of the Bessel functions $J_{\nu}$ for large positive $\nu$ given in \cite[Equation (3.1)]{Olver52} and \cite[Formula 10.19.8]{DLMF}, which is described in the following lemma.
\begin{lemma}\label{lem:B}
As $\nu \to +\infty$, there holds
\begin{align}\label{eq:asyJnu}
J_{\nu}(\nu + \tau \nu^{\frac 13}) \sim \frac{2^{\frac 13}}{\nu^{\frac 13}}\Ai(-2^{\frac 13}\tau)\sum_{k=0}^{\infty}\frac{A_k(\tau)}{\nu^{\frac{2k}{3}}}+\frac{2^{\frac 23}}{\nu^{\frac 13}}\Ai'(-2^{\frac 13}\tau)\sum_{k=1}^{\infty}\frac{B_k(\tau)}{\nu^{\frac{2k}{3}}}
\end{align}
for any fixed complex number $\tau$, where $A_k(\tau)$ and $B_k(\tau)$ are certain polynomials of increasing degrees.
\end{lemma}
\begin{proof}
Recalling the definition of the matrix-valued function $\Psi_{\nu}$ in \eqref{def:psinu} and the following relations between $I_{\nu}$ and $J_{\nu}$ (cf. \cite[Formula 10.27.6]{DLMF})
\begin{align}
    I_{\nu}(z)=e^{\mp\frac{\nu\pi\ii}{2}}J_{\nu}\left(z e^{\pm\frac{\pi\ii}{2}}\right),\quad \pm \arg z \in \left(-\pi, \frac{\pi}{2}\right),
\end{align}
we have
\begin{equation}
J_{\nu}(\nu + \tau \nu^{\frac 13}) = \begin{pmatrix}
1 & 0
\end{pmatrix} \frac{1}{\sqrt{\pi}} e^{\frac{\pi \ii}{4} \pm \frac{\nu \pi \ii}{2}} \Psi_{\nu}(\nu^2 w) \begin{pmatrix}
1\\
0
\end{pmatrix}, \qquad \pm \Im w >0, 
\end{equation}
with $w = (1 + \tau \nu^{-\frac 23})^2$. Tracing back the transformations $\Psi \to Y \to T \to S$ defined in \eqref{def:Y}, \eqref{def:T} and \eqref{def:S}, we obtain
\begin{align}
\Psi_{\nu}(\nu^2 w)=\begin{pmatrix}
1 & 0\\
-\frac{4 \nu^2+3}{8} & 1
\end{pmatrix} \nu^{-\frac 12 \sigma_3}S(w)e^{-\nu g(w)\sigma_3}\begin{cases}
\begin{pmatrix}
1 & 0\\
-e^{-\pi \ii \nu} & 1
\end{pmatrix}, \quad & \arg(w-1) \in \left(0, \frac{\pi}{3}\right),\\
I, \quad & \arg(w-1) \in \left(\frac{\pi}{3},\frac{5\pi}{3}\right),\\
\begin{pmatrix}
1 & 0\\
-e^{-\pi \ii \nu} & 1
\end{pmatrix}, \quad & \arg(w-1) \in \left(\frac{2\pi}{3},2\pi\right).
\end{cases}
\end{align}
For any fixed complex number $\tau$, it is evident that $w= (1 + \tau \nu^{-\frac 23})^2 \in D(1,\varepsilon)$ for large $\nu$. This, together with \eqref{def:R} and \eqref{def:P1}, implies that 
\begin{equation}\label{exp-bessel}
J_{\nu}(\nu + \tau \nu^{\frac 13}) = 2^{\frac 12}\nu^{-\frac 12}e^{\frac{\pi \ii}{4}} \begin{pmatrix}
1 & 0
\end{pmatrix} R((1 + \tau \nu^{-\frac 23})^2)E( (1 + \tau \nu^{-\frac 23})^2)\begin{pmatrix}
\Ai(\nu^{\frac 23}f( (1 + \tau \nu^{-\frac 23})^2)) \\-\ii \Ai'(\nu^{\frac 23}f( (1 + \tau \nu^{-\frac 23})^2))
\end{pmatrix},
\end{equation}
where the functions $E$ and $R$ are given in \eqref{def:E} and \eqref{def:R}, respectively. Similar to the expansions \eqref{93}, \eqref{94} and \eqref{110}, we have from \eqref{E1}, \eqref{def:R1}, \eqref{R1} and \eqref{def:f} that, as $\nu \to +\infty$,
\begin{align}\label{exp-E}
E((1 + \tau \nu^{-\frac 23})^2)&= \nu^{\frac 16 \sigma_3}\left(I - \frac{\tau}{5}\sigma_3\nu^{-\frac 23}+ \begin{pmatrix}
\frac{3}{35}\tau^2 & 0\\
0 & -\frac{8}{175}\tau^2
\end{pmatrix}\nu^{-\frac 43} + \Boh(\nu^{-2}) \right) \nonumber \\
&\quad \times 2^{-\frac 16 \sigma_3}e^{-\frac{\pi \ii}{4} \sigma_3} \sigma_3,\\
R((1 + \tau \nu^{-\frac 23})^2)&=2^{\frac{1}{12} \sigma_3}\nu^{\frac 16 \sigma_3} \left(I + \begin{pmatrix}
0 & 0\\
\frac{7\cdot 2^{\frac 16}}{40} &0
\end{pmatrix}\nu^{-\frac 23}+\begin{pmatrix}
0&\frac{2^{-\frac 16}}{70} \nonumber \\
-\frac{2^{\frac 16}}{25}\tau&0
\end{pmatrix}\nu^{-\frac 43} + \Boh(\nu^{-2})\right)\label{exp-R}\\
&\quad \times 2^{-\frac{1}{12} \sigma_3}\nu^{-\frac 16 \sigma_3},
\end{align}
and
\begin{align}
\nu^{\frac 23}f((1 + \tau \nu^{-\frac 23})^2) &= -2^{\frac 13}\tau + \sum_{j=1}^{\infty} \hat p_{1,j}(\tau) \nu^{-\frac{2j}{3}},\label{exp-f}
\end{align}
where $\hat p_{1,j}(\tau)$ are certain polynomials of degree $j+1$ with
\begin{align}
\hat p_{1,1}(\tau) = \frac{3\cdot 2^{\frac 13}}{10} \tau^2,\qquad
\hat p_{1,2}(\tau)  = -\frac{32\cdot 2^{\frac 13}}{175} \tau^3.
\end{align}
Substituting \eqref{exp-E}--\eqref{exp-f} into \eqref{exp-bessel}, we obtain  \eqref{eq:asyJnu} after a simplification, 
where $A_k(\tau)$ and $B_k(\tau)$ are certain polynomials in $\tau$ with the first few being
\begin{align}
A_0(\tau) = 1, \quad A_1(\tau) =-\frac{1}{5}\tau, \quad B_1(\tau) = \frac{3}{10}\tau^2.
\end{align}
This completes the proof of Lemma \ref{lem:B}. 
\end{proof}
\begin{remark}
The above proof is actually valid for $|(1 + \tau \nu^{-\frac 23})^2-1|<\varepsilon$ with $\varepsilon \in (0,1)$ being any fixed small constant. Since 
\begin{align}
|(1 + \tau \nu^{-\frac 23})^2-1|=|2\tau \nu^{-\frac 23} + \tau^2 \nu^{-\frac 43}|<2|\tau| \nu^{-\frac 23} + |\tau|^2 \nu^{-\frac 43},
\end{align}
it is readily seen that the expansion \eqref{eq:asyJnu} is also valid provided 
\begin{equation}
|\tau| < (\sqrt{1+\varepsilon}-1)\nu^{\frac 23}.
\end{equation}

\end{remark}

\end{appendix}
%
%

\section*{Acknowledgements}
We thank Folkmar Bornemann for helpful comments on the preliminary version of this paper and the anonymous referees for their careful reading and constructive suggestions. Luming Yao was partially supported by National Natural Science Foundation of China under grant number 12401316 and Scientific Foundation for Youth Scholars of Shenzhen University under grant number 868--000001032818. Lun Zhang was partially supported by National Natural Science Foundation of China under grant numbers 12271105, 11822104, and ``Shuguang Program'' supported by Shanghai Education Development Foundation and Shanghai Municipal Education Commission.

\end{document}